\definecolor{darkred}{rgb}{0.55, 0.0, 0.0}
\definecolor{ao}{rgb}{0.0, 0.0, 1.0}
\tikzset{
  vertex/.style={circle,draw}
}
\pgfplotsset{compat=newest} 
\newcommand{\Sedge}{\mbox{Sedge}}
\newcommand{\nSedge}{\mbox{nSedge}}
\begin{document}
\title{Utility-Based Graph Summarization: New and Improved}

\settopmatter{authorsperrow=4}
\author{Mahdi Hajiabadi}
\affiliation{%
  \institution{University of Victoria}
  \streetaddress{P.O. Box 1212}
  \city{Victoria}
  \state{Canada}
  \postcode{43017-6221}
}
\email{mhajiabadi@uvic.ca}

\author{Jasbir Singh}
\orcid{0000-0002-1825-0097}
\affiliation{%
  \institution{University of Victoria}
  \streetaddress{P.O. Box 1212}
  \city{Victoria}
  \state{Canada}
  \postcode{43017-6221}
}
\email{jasbirsingh@uvic.ca}

\author{Venkatesh Srinivasan}
\orcid{0000-0001-5109-3700}
\affiliation{%
  \institution{University of Victoria}
  \streetaddress{P.O. Box 1212}
  \city{Victoria}
  \state{Canada}
  \postcode{43017-6221}
}
\email{srinivas@uvic.ca}

\author{Alex Thomo}
\affiliation{%
  \institution{University of Victoria}
  \streetaddress{P.O. Box 1212}
  \city{Victoria}
  \state{Canada}
  \postcode{43017-6221}
}
\email{thomo@uvic.ca}





\begin{abstract}
\looseness=-1
A fundamental challenge in graph mining is the ever increasing size of datasets. 
Graph summarization aims to find a compact representation resulting in faster algorithms and reduced storage needs. 
The flip side of graph summarization is often loss of utility which significantly diminishes its usability. 
The key questions we address in this paper are:
{\em 
(1)~How to summarize a graph without any loss of utility?
(2)~How to summarize a graph with some loss of utility but above a user-specified threshold?
(3)~How to query graph summaries without graph reconstruction?} 
We also aim at making graph summarization available for the masses by efficiently handling web-scale graphs using only a consumer grade machine.
Previous works suffer from conceptual limitations and lack of scalability. 

In this work, we make three key contributions. 
First, we present a utility-driven graph summarization method, based on a clique and independent set decomposition, that produces significant compression with zero loss of utility. The compression provided is significantly better than state-of-the-art in lossless graph summarization, while the runtime is two orders of magnitude lower. 
Second, we present a highly scalable algorithm for the lossy case, which foregoes the expensive iterative process that hampers previous work. Our algorithm achieves this by combining a memory reduction technique and a novel binary-search approach.   
In contrast to the competition, we are able to handle web-scale graphs in a single machine without performance impediment as the utility threshold (and size of summary) decreases.
Third, we show that our graph summaries can be used as-is to answer several important classes of queries, such as triangle enumeration, Pagerank, and shortest paths. 
This is in contrast to other works that incrementally reconstruct the original graph for answering queries, thus incurring additional time costs.
\end{abstract}

\maketitle

 \section{Introduction}

Graphs are ubiquitous and are the most natural representation for many real-world data such as web graphs, social networks, communication networks, citation networks, transaction networks, ecological networks and epidemiological networks. Such graphs are growing at an unprecedented rate. 
For instance, 
the web graph consists of more than a trillion websites \cite{websites} and 
the social graphs of Facebook, Twitter, and Weibo, have billions of users with many friend/follow connections per user~\cite{facebook,twitter,weibo}. 
%
%
Consequently, storing such graphs and answering queries, mining patterns, and visualizing them are becoming highly impractical \cite{1,5}. 
\par
Graph summarization is a fundamental task of finding a compact representation of the original graph called the summary. 
It allows us to decrease the footprint of the graph and query more efficiently \cite{3,4,5}.
Graph summarization also makes possible effective visualization thus facilitating better insights on large-scale graphs \cite{shen2006visualanalysis,19,20,21,22}. 
Also crucial is the privacy that a graph summary can provide for privacy-aware graph analytics \cite{2,23}.

The problem has been approached from different directions, such as compression techniques to reduce the number of required bits for describing graphs \cite{rossi2018graphzip, apostolico2009graph,boldi2004webgraph,shah2017summarizing}, sparsification techniques to remove less important nodes/edges in order to make the graph more informative \cite{spielman2011graph, 9} and 
grouping methods that merge nodes into supernodes based on some interestingness measure \cite{2,4,5,6,24}. 
Grouping methods constitute the most popular summarization approach because they allow the user to logically relate the graph summary to the original graph.  

The flip side of summarization is loss of utility. 
This is measured in terms of edges of the original graph that are lost and spurious edges that are introduced in the summary. 
In this paper, we focus on grouping-based utility-driven  graph summarization. 
In terms of state-of-the-art, \cite{5} and \cite{1} offer different ways of measuring loss of utility. The first computes the loss by assuming all edges as unweighted and of equal importance while the second incorporates edge centralities as weights in the loss computation. Also, the first uses a loss budget that is local to each node while the second uses a global budget.  

There are several limitations with state-of-the-art \cite{5,1} on utility-driven graph summarization. 
By not considering edge importance, the SWeG algorithm of \cite{5} produces (lossy) summaries which are inferior to those produced by UDS of \cite{1} which uses edge importance in its process. 
UDS, however, is not able to generate a meaningful lossless summary if losslessness is required by the application. SWeG, on the other hand, has the option to produce lossless summaries, but extending SWeG to use edge importance for the lossy case is not trivial.
Both SWeG and UDS are slow and impractical to run for large datasets in a single machine, thus hampering their utility. 
SWeG needs to utilize a cluster of machines to be able to handle datasets that are large but still can fit easily in the memory of one machine. 
UDS is a $O(V^2)$-time algorithm; based on our experiments, it can only handle small to moderate datasets requiring a large amount of time, often more than 100 hours. 

To address these challenges, we propose two utility-driven algorithms, G-SCIS and T-BUDS, for the lossless and lossy cases, respectively, which can handle large graphs efficiently on a single consumer-grade machine. 
G-SCIS is based on a clique and independent set decomposition that produces significant compression with zero loss of utility. Compared to SWeG, G-SCIS produces much better summaries with respect to reduction in number of nodes, while having a runtime which is lower by two-orders of magnitude.
We reiterate here that UDS is not able to produce a lossless summary that is different from the original graph, and as such, is not a contender in the lossless case.

We also show that G-SCIS summaries possess an attractive characteristic not present in SWeG (or other methods) summaries. 
Due to our clique and independent set decomposition, we are able to compute important classes of queries, such as Pagerank, triangle enumeration, and shortest paths using the G-SCIS summary ``as-is'' without the need to perform a reconstruction of the original graph. In contrast, for SWeG summaries, we need to use neighborhood queries as primitives, which amounts to incrementally reconstructing the original graph, thus incurring additional time costs. 

Our second algorithm, T-BUDS, is a highly scalable algorithm for the lossy case. 
It shares the utility-threshold-driven nature of UDS \cite{1} allowing the user to calibrate the loss of utility according to their needs. However, T-BUDS forgoes the main expensive iterative process that severely hampers UDS. We achieve this by combining a memory reduction technique based on Minimum Spanning Tree and a novel binary-search approach. 
T-BUDS not only is orders of magnitude faster than UDS, but it also exhibits a useful characteristic; namely, T-BUDS (in contrast to UDS) is mostly computationally insensitive to lowering the utility threshold which amounts to asking for smaller size summary. As such, a user can conveniently experiment with different utility thresholds without incurring a performance penalty.


In summary, our contributions are as follows.

\begin{itemize}
    \item We propose an optimal algorithm, G-SCIS, for lossless graph summarization  and show that it outperforms the state of art, SWeG, by two orders of magnitude in runtime while achieving better reduction in number of nodes. 
    
    \item We show interesting applications of the summary produced by G-SCIS to triangle enumeration, Pagerank, and shortest path queries. 
    For instance, we show that we can enumerate triangles and compute Pagerank on the G-SCIS summaries much faster than on the original graph. 
    
	\item We propose a highly scalable, utility-driven algorithm, T-BUDS, for lossy summarization. This algorithm achieves high scalability by combining memory reduction using MST with a novel binary search procedure. While preserving all the nice properties of the UDS summary, T-BUDS outperforms UDS by two orders of magnitude.
	
	
	
	\item We show that the summary produced by T-BUDS,  can be used to answer top-$k$ central node queries based on various centrality measures with high level of accuracy.
	
\end{itemize}


\section{Preliminaries \label{sec:Preliminary}}
Let $G=(V,E)$ be an undirected graph, where 
$V$ is the set of nodes and 
$E$ the set of edges.  
A summary graph is also undirected and denoted by 
$\mathcal{G}= (\mathcal{V,E})$, where  $\mathcal{V}$ is a set of supernodes, and
$\mathcal{E}$ a set of superedges. 

More precisely we have 
$\mathcal{V}=\{S_1,S_2, \dots, S_k\}$
such that $k\leq |V|$, $V = \bigcup_{i=1}^{k}S_i$ and $\forall i\neq j, S_i\cap S_j = \emptyset$. 
The supernode which a node $u\in V$ belongs to is denoted by $S(u)$. 

\smallskip
\noindent
{\bf Reconstruction.} 
Given a summary graph, we can (lossily) reconstruct the original graph as follows. 
For each superedge $(S_i,S_j)$ we construct edges $(u,v)$, for each $u\in S_i$ and $v \in S_j$.
For $i\neq j$, 
this amounts to building a complete bipartite graph with $S_i$ and $S_j$ as its {\em parts}.
For $i = j$ (a self-loop superedge), 
the reconstruction amounts to building a clique among the vertices of $S_i$.
Figure \ref{fig:differentsuperedge} shows how the reconstructed graph is affected by different types of superedges. Figures (a) and (c) show two different superedges and figures (b) and (d) show their reconstructed versions.
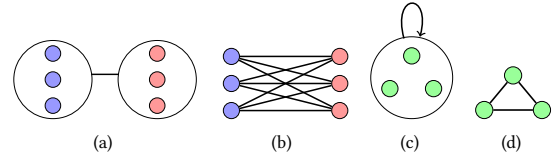
\begin{figure}
\subfigure[]{
\resizebox{.14\textwidth}{!}{
\begin{tikzpicture}[node distance = {0.25cm and .15cm}, every node/.style = {draw, circle}]
    \node[fill = blue!40] (A1) at (0.5,0) {};
    \node[fill = blue!40] (B1) at (0.5,-0.5) {};
    \node[fill = blue!40] (C1) at (0.5,-1) {};
    \node[draw=none,scale=4.7] (D) at (0.5,-0.4){};
    \draw [](0.5,-0.5) circle (0.75cm);
    \node[fill = red!40] (A2) at (2.5,0) {};
    \node[fill = red!40] (B2) at (2.5,-0.5) {};
    \node[fill = red!40] (C2) at (2.5,-1) {};
    \draw [](2.5,-0.5) circle (0.75cm);
    \node[draw=none,scale=4.7] (D1) at (2.5,-0.4){};
     \draw[thick] (D) --  (D1);
 \end{tikzpicture}
 }
 }
 \subfigure[]{
    \resizebox{.1\textwidth}{!}{
        \begin{tikzpicture}[node distance = {0.25cm and .15cm}, every node/.style = {draw, circle}]
            \node[fill = blue!40] (A1) at (0.5,-1.5) {};
            \node[fill = blue!40] (B1) at (0.5,-2) {};
            \node[fill = blue!40] (C1) at (0.5,-1) {};
            \node[fill = red!40] (A2) at (2.5,-1.5) {};
            \node[fill = red!40] (B2) at (2.5,-2) {};
            \node[fill = red!40] (C2) at (2.5,-1) {};
            \draw[thick] (A1) --  (A2);
            \draw[thick] (A1) --  (B2);
            \draw[thick] (A1) --  (C2);
            \draw[thick] (B1) --  (A2);
            \draw[thick] (B1) --  (B2);
            \draw[thick] (B1) --  (C2);
            \draw[thick] (C1) --  (A2);
            \draw[thick] (C1) --  (B2);
            \draw[thick] (C1) --  (C2);
        \end{tikzpicture}
    }
 }
 \subfigure[]{
    \resizebox{.065\textwidth}{!}{
        \begin{tikzpicture}[node distance = {0.25cm and .15cm}, every node/.style = {draw, circle}]
            \node[fill = green!40] (A1) at (5.5,0) {};
            \node[fill = green!40] (B1) at (5.1,-0.6) {};
            \node[fill = green!40] (C1) at (5.9,-0.6) {};
            \draw [](5.5,-0.4) circle (0.75cm);
            \node[draw=none,scale=3.5] (D2) at (5.5,-0.21){};
             \draw[thick] (D2) edge [loop above ] (D2);
        \end{tikzpicture}
    }
}
 \subfigure[]{
    \resizebox{.06\textwidth}{!}{
        \begin{tikzpicture}[node distance = {0.25cm and .15cm}, every node/.style = {draw, circle}]
            \node[fill = green!40] (A3) at (5.5,-1.4) {};
            \node[fill = green!40] (B3) at (5,-2) {};
            \node[fill = green!40] (C3) at (6,-2) {};
            \draw[thick] (A3) --  (B3);
            \draw[thick] (A3) --  (C3);
            \draw[thick] (B3) --  (C3);
        \end{tikzpicture}
    }
}
\caption{(a,c) Two different type of superedges which result in two different types of reconstructed graph (b,d). \label{fig:differentsuperedge}}
\end{figure}

\smallskip
\noindent
{\bf Utility.}
In order to reason about the utility of a graph summarization we need to define the notion of edge importance. We denote the importance of an edge $(u,v)$ in $G$ by $C(u,v)$.
For example, the edge importance could measure its centrality. 
Obviously, the more important edges we recover during reconstruction, the better it is.
However, this should not come at the cost of introducing spurious edges. 
In order to measure the amount of spuriousness, we also introduce the notion of importance for spurious edges and denote that by $C_s(u,v)$.
Now we give the definition of utility as follows. 
\begin{equation}
    u(\mathcal{G}) = \sum_{\substack{(S_i,S_j)\in \mathcal{E}}}\left(\sum_{\substack{(u,v)\in E \\ u \in S_i, v \in S_j}} C(u,v) - \sum_{\substack{(u,v)\notin E \\ u \in S_i, v \in S_j} } C_s(u,v)\right)
    \label{eq:utilitythreshold}
\end{equation}
In order to have a good summarization, the user defines a threshold $\tau$ and requests that 
$u(\mathcal{G})>\tau$. 
The $C(u,v)$ and $C_s(u,v)$ values are normalized so that their respective sums equal one. 
A similar utility model is used in \cite{1} but without weights for spurious edges. 

Figure \ref{fig:exampleUtility} shows an example for this framework. There are 14 edges and 11 nodes.
We assume the weight of each actual edge is equal to $\frac{1}{|E|} = \frac{1}{14}$ and the weight of each spurious edge is equal to $\frac{1}{\binom{11}{2} - 14} = \frac{1}{41}$. 
In part (a) the set of nodes inside the circles merge together into new supernodes and the utility still remains one because no information has been lost. 
In part (b) the circles show two merge cases. 
In the first case, the blue supernode merges with the red node and in the second case, the green supernode merges with the blue node. In the first case, there is a utility loss of $\frac{1}{14}$ for missing one actual edge (see part (d) for the reconstructed graph). We chose not to add an edge from the new blue supernode to one of the neighbours of the red node because doing so would introduce three spurious edges for a cost of $\frac{3}{41}$ that is greater than $\frac{1}{14}$ (cost of missing one actual edge). Similarly, in the second case, there is a utility loss of $\frac{2}{41}$ for introducing two spurious edges. Therefore, the utility after this step is  $1 - \frac{1}{14} - \frac{2}{41} = \frac{505}{574}$. 
Part (c) shows the summary after all the four merges and part (d) shows the reconstructed graph of summary in part (c). 
\par
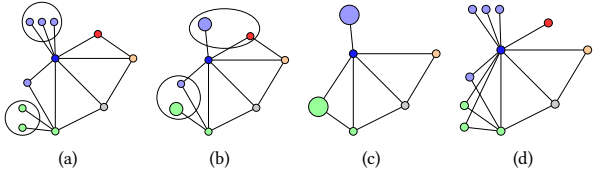
\begin{figure}
\subfigure[]{
   \resizebox{.1\textwidth}{!}{
\begin{tikzpicture}[node distance = {1.0cm and .5cm}, every node/.style = {draw, circle}]
    \node[fill = green!40] (A) at (0,0) {};
    \node[fill = green!40] (A1) at (-1.35,0.95) {};
    \node[fill = green!40] (A2) at (-1.35,0.15) {};
    \draw [](-1.35,0.55) circle (0.71cm);
    \node[fill = blue!90] (B) at (0,3) {};
    \node[fill = blue!40] (B1) at (-0.05,4.5) {};
    \node[fill = blue!40] (E) at (-0.55,4.5) {};
    \node[fill = blue!40] (B2) at (-1.15,2) {};
    \node[fill = blue!40] (B3) at (-1.05,4.5) {};
    \draw [](-0.55,4.5) circle (0.81cm);
    \node[fill = red!80] (C) at (1.75,4) {};
    \node[fill = gray!40] (D) at (2,1) {};
    \node[fill = orange!40] (F) at (3.2,3) {};
    \draw (A) --  (B);
    \draw (B) -- (B2);
    \draw (B) -- (B3);
    \draw (B) -- (B1);
    \draw (A) -- (B2);
    \draw (A) --  (A1);
    \draw (A) --  (A2);
    \draw  (B) -- (C);
    \draw  (B) -- (F);
    \draw (D) -- (A);
    \draw (D) -- (B);
    \draw (D) -- (F);
    \draw  (C) -- (F);
    \draw  (B) -- (E); 
\end{tikzpicture}
}
}
\subfigure[]{
\resizebox{.1\textwidth}{!}{
\begin{tikzpicture}[node distance = {1.0cm and .5cm}, every node/.style = {draw, circle}]
\node[fill = green!40] (A) at (0,0) {};
    \node[fill = blue!90] (B) at (0,3) {};
    \node[fill = blue!40] (B2) at (-1.15,2) {};
    \node[fill = red!80] (C) at (1.75,4) {};
    \node[fill = gray!40] (D) at (2,1) {};
    \node[fill = orange!40] (F) at (3.2,3) {};
    \node[fill = green!40,minimum size=0.55cm] (A1) at (-1.35,.95) {};
    \draw [](-1.25,1.43) circle (0.91cm);
    \node[fill = blue!40,minimum size=0.55cm] (B1) at (-0.15,4.5) {};
    \draw (0.67,4.32) ellipse (1.47cm and 0.83cm);
    \draw (A) --  (B);
    \draw (B) -- (B2);
    \draw (B) -- (F);
    \draw (B) -- (B1);
    \draw (A) -- (B2);
    
     \draw (A) --  (A1);
    \draw  (B) -- (C);
    \draw (D) -- (A);
    \draw (D) -- (B);
    \draw (D) -- (F);
    \draw  (C) -- (F);
\end{tikzpicture}
}
}
\subfigure[]{
\resizebox{.1\textwidth}{!}{
\begin{tikzpicture}[node distance = {1.0cm and .5cm}, every node/.style = {draw, circle}]
    \node[fill = green!40] (A) at (0,0) {};
    \node[fill = blue!90] (B) at (0,3) {};
    \node[fill = gray!40] (D) at (2,1) {};
    \node[fill = orange!40] (F) at (3.2,3) {};
    \node[fill = green!40,minimum size=0.75cm] (A1) at (-1.35,.95) {};
    \node[fill = blue!40,minimum size=0.75cm] (B1) at (-0.15,4.5) {};
    
    \draw (A) --  (B);
    \draw (B) -- (A1);
    \draw (B) -- (B1);
    \draw (A) -- (D);
     \draw (A) --  (A1);
    \draw  (B) -- (F);
    \draw (D) -- (B);
    \draw (D) -- (F);
\end{tikzpicture}
}
}
\subfigure[]{
\resizebox{.1\textwidth}{!}{
\begin{tikzpicture}[node distance = {1.0cm and .5cm}, every node/.style = {draw, circle}]
     \node[fill = green!40] (A) at (0,0) {};
    \node[fill = green!40] (A1) at (-1.35,0.95) {};
    \node[fill = green!40] (A2) at (-1.35,0.15) {};
    \node[fill = blue!90] (B) at (0,3) {};
    \node[fill = blue!40] (B1) at (-0.05,4.5) {};
    \node[fill = blue!40] (E) at (-0.55,4.5) {};
    \node[fill = blue!40] (B2) at (-1.15,2) {};
    \node[fill = blue!40] (B3) at (-1.05,4.5) {};
    \node[fill = red!80] (C) at (1.75,4) {};
    \node[fill = gray!40] (D) at (2,1) {};
    \node[fill = orange!40] (F) at (3.2,3) {};
    
    \draw (A) --  (B);
    \draw (A1) --  (B);
    \draw (A2) --  (B);
    \draw (B) -- (B2);
    \draw (B) -- (B3);
    \draw (B) -- (B1);
    \draw (A) -- (B2);
    \draw (A) --  (A1);
    \draw (A) --  (A2);
    \draw  (B) -- (C);
    \draw  (B) -- (F);
    \draw (D) -- (A);
    \draw (D) -- (B);
    \draw (D) -- (F);
    \draw  (B) -- (E); 
\end{tikzpicture}
}
}
\caption{Example of the utility-based framework. 
(a) Shows the original graph with two candidate merges with no loss of utility. The result is shown in (b) along with two more candidate merges. 
The merge of the green supernode with the blue node introduces two spurious edges (see the relevant part in the reconstructed graph in (d)). 
The merge of the blue supernode with the red node loses an actual edge as shown by the result in (d). (d) shows the reconstructed graph starting from the summary graph in (c).
\label{fig:exampleUtility}}
\end{figure}

\begin{table}[]
    \begin{tabular}{|m{0.125\textwidth} |m{0.31\textwidth}|}  \hline
        Symbols & Definition \\ \hline
        $G = (V,E)$ &  Input graph with set $V$ of nodes and set $E$ of edges \\  \hline
         $\mathcal{G} = (\mathcal{V}, \mathcal{E})$ & Summary graph with set $\mathcal{V}$ of supernodes and set $\mathcal{E}$ of superedges  \\ \hline
         $S_i$ & The $i$-th supernode  \\ \hline
         $S(u)$ & Supernode to which node $u$ belongs \\ \hline         $C_u$ & Centrality of node $u$ \\ \hline
         $C(u,v)$ & Centrality of edge $(u,v)$ \\ \hline
         $F$ & $\{(a,c)|(a,b)\in E , (b,c) \in E\}$ \\ \hline
         $G_{2-hop}  =(V,F)$ &  2-hop graph of $G$ \\ \hline
         $L$ & Sorted list of $F$  \\ \hline
         $H$ & Sorted list of MST of $G_{2-hop}$ \\ \hline
         $d_{avg}$ & Average degree \\ \hline
         $u(\mathcal{G}^t)$ & The utility value of the summary after iteration $t$ \\ \hline
         $\mbox{Sedge}(S(u),S(v))$ & Cost of adding a superedge  between $S(u)$ and $S(v)$\\ \hline
         $\mbox{nSedge}(S(u), S(v))$ & Cost of not adding a  superedge  between $S(u)$ and $S(v)$ \\ \hline
         $N(u)$ &  The neighborhood set of $u$ in graph $G$ \\ \hline
         $N(X)$ & The neighborhood set of $X$ in graph $\mathcal{G}$ \\ \hline
         $|X|$ & The number of nodes in supernode $X$ \\ \hline
         $nodes(X)$ & The set of all nodes in supernode $X$ \\ \hline
    \end{tabular}
    \caption{Table of frequently used symbols}
    \vspace*{-1cm}
    \label{tbl:notations}
\end{table}
Having described the utility-based framework, we  define the optimization problem we study as follows.
Given graph $G=(V,E)$ and user-specified utility threshold $\tau$, our objective is to
\begin{equation}
\begin{split}
    \mbox{minimize } |\mathcal{V}|  \\
    \mbox{ subject to $u(\mathcal{G}) \geq \tau$}.
    \end{split}
\end{equation}

\section{Optimal lossless algorithm \label{sec:OptLessless}}

\looseness=-1
Kumar et al. \cite{1} showed that given a general utility threshold~$\tau$, graph summarization is \emph{NP-Hard}. 
Furthermore, for any $\epsilon > 0$, there is no efficient $O(n^{1-\epsilon})$-approximation algorithm for finding the minimum number of supernodes.  
In this section, we analyze the problem for the special case of $\tau = 1$, that is, lossless graph summarization.
When we reconstruct the graph from such a summary, no actual edge will be lost and no spurious edge will be introduced.
We remark that the UDS algorithm of \cite{1} is such that the summary obtained for $\tau=1$ is in fact just the original graph. In this section we show that we can do much better, namely we obtain in {\em polynomial time} the {\em optimal} summary in terms of the objective function, i.e. we obtain the summary with the smallest number of supernodes.
We start with the following lemma.

\begin{lemma} 
    In any summary corresponding to $\tau =1$,  each node can be  
(1) in a supernode of size one, or 
(2) inside a supernode representing a clique in $G$ with size greater than one, or 
(3) inside a supernode representing an independent set in $G$ (a set of nodes where no two nodes are connected) of size greater than one.
\end{lemma}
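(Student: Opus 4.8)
The plan is to fix an arbitrary supernode $S_i$ of the summary and argue that the set of $G$-edges lying \emph{inside} $S_i$ is completely pinned down by a single binary choice: whether or not $S_i$ carries a self-loop superedge. Once that dichotomy is established, the three cases of the lemma fall out immediately — a singleton supernode is case (1), and a supernode of size greater than one is forced to be either a clique or an independent set.

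First I would recall the reconstruction rule and observe that the only mechanism able to place an edge $(u,v)$ with both endpoints $u,v \in S_i$ into the reconstructed graph is the self-loop superedge $(S_i,S_i)$: by the reconstruction rule a self-loop rebuilds the full clique on $S_i$, whereas every non-loop superedge $(S_i,S_j)$ with $i \neq j$ creates edges only between $S_i$ and a \emph{different} supernode $S_j$, never within $S_i$. Hence no inter-supernode superedge contributes any edge internal to $S_i$.

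Next I would split into the two cases. If $S_i$ has a self-loop, reconstruction produces all $\binom{|S_i|}{2}$ internal pairs as edges; since $\tau=1$ forbids spurious edges, every such pair must already be an edge of $G$, so $S_i$ induces a clique, which is case (2). If $S_i$ has no self-loop, reconstruction produces no internal edge at all; since $\tau=1$ forbids losing any actual edge, there can be no $G$-edge with both endpoints in $S_i$, so $S_i$ induces an independent set, which is case (3). Combining this with the trivial observation that a size-one supernode vacuously satisfies both conditions (and is recorded as case (1)) completes the classification.

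I expect no serious obstacle here; the lemma is essentially a structural consequence of the reconstruction semantics together with the two defining requirements of losslessness (no lost actual edge, no introduced spurious edge). The only point requiring care is the claim that \emph{only} a self-loop can generate an intra-supernode edge — that is, that complete-bipartite reconstructions across distinct supernodes can never accidentally create an edge inside a single supernode — which follows directly from the disjointness condition $S_i \cap S_j = \emptyset$ for $i \neq j$ in the definition of $\mathcal{V}$.
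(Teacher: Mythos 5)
Your proof is correct and follows essentially the same route as the paper's: both arguments rest on the reconstruction semantics (self-loop superedge yields a clique, absence of one yields an independent set within a supernode) and then invoke the two losslessness constraints of $\tau=1$ to force the induced subgraph of $G$ on each supernode to match the reconstruction exactly. Your additional remark that disjointness of supernodes prevents cross-supernode superedges from creating intra-supernode edges is a careful elaboration of a step the paper leaves implicit, but it is the same argument.
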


\begin{proof}
Recall that during reconstruction, 
a supernode either generates just 
one node (when there is only one node in the supernode), 
a clique (when a self-loop exists), or 
an independent set (when a self-loop does not exist).
Now if the original graph does not precisely correspond to what is reconstructed, then there will be at least either one spurious edge added (in the case of a clique supernode), or one actual edge lost (in the case of an independent set supernode). Thus the summary would not be lossless and the reconstructed graph would be different from the original graph.
\end{proof}

We observe another property of lossless summaries.

\begin{lemma} \label{lemma:either-clique-or-is}
A node $v$ cannot be in a clique supernode in one lossless summary and in an independent set supernode in another.
\end{lemma}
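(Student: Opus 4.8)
The plan is to argue by contradiction and exploit the fact that a lossless summary reconstructs $G$ exactly. Suppose some node $v$ lies in a clique supernode $S$ (with $|S|>1$) in one lossless summary and in an independent-set supernode $T$ (with $|T|>1$) in another. Because both summaries reconstruct precisely $G$, the neighborhood $N(v)$ of $v$ in $G$ is literally the same set in both scenarios; this invariance, combined with the per-supernode reconstruction rule, is what I intend to turn into a contradiction.

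First I would pin down the local structure that each kind of membership forces on $v$. Write $N[v]=N(v)\cup\{v\}$ for the closed neighborhood. For the clique supernode $S$, losslessness forces every pair inside $S$ to be adjacent and forces all nodes of $S$ to share the same external neighbors (each connects by a complete bipartite pattern to exactly the supernodes adjacent to $S$). Hence for any $w\in S\setminus\{v\}$ we get $(v,w)\in E$ and $N[v]=N[w]$, i.e.\ $v$ and $w$ are ``true twins.'' For the independent-set supernode $T$, losslessness forces no edges inside $T$ together with a common external neighborhood, so for any $x\in T\setminus\{v\}$ we get $(v,x)\notin E$ and $N(v)=N(x)$, i.e.\ ``false twins.'' Extracting these two neighborhood identities carefully is the technical core of the argument.

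Next I would combine them. Fix such a $w$ (a neighbor of $v$ with $N[v]=N[w]$) and such an $x$ (a non-neighbor of $v$ with $N(v)=N(x)$), noting $w\neq x$ since one is adjacent to $v$ and the other is not. Since $w\in N(v)=N(x)$, the edge $(w,x)$ is present, so $x\in N(w)\subseteq N[w]=N[v]=N(v)\cup\{v\}$. As $x\neq v$, this forces $x\in N(v)$, contradicting $(v,x)\notin E$.

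The step I expect to be the main obstacle is not the final one-line contradiction but the careful extraction of the two twin identities from losslessness, and in particular keeping straight the asymmetry that a clique supernode equalizes the \emph{closed} neighborhoods of its members while an independent-set supernode equalizes their \emph{open} neighborhoods. Once that distinction is made explicit, the contradiction follows immediately, and I would not need any global property of the summaries beyond the reconstruction rule already established in the preceding lemma.
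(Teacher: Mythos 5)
Your proof is correct and takes essentially the same route as the paper's: both fix a clique partner $w$ (the paper's $v_k$) and an independent-set partner $x$ (the paper's $v_j$), use the shared-open-neighborhood property to get the edge $(w,x)$, and then show this edge violates the clique-twin condition between $v$ and $w$. Your explicit use of closed versus open neighborhoods ($N[\cdot]$ vs.\ $N(\cdot)$) is a slightly cleaner packaging of the same argument, not a different one.
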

\begin{proof}
 For a contradiction, let us assume that nodes $v_i,v_j$ are inside an independent set supernode in one lossless summary and $v_i,v_k$ are inside a clique supernode in another. This implies that  $N(v_i)$ (set of neighbors of $v_i$) is exactly the same as  $N(v_j)$. Since $v_i,v_k$ are inside a clique supernode, $v_k \in N(v_i)$ and thus $v_k \in N(v_j)$ (because  $N(v_i) = N(v_j)$). Also, since $v_i,v_j$ are inside an independent set, $v_i \notin N(v_j)$.  Further, for $v_i$ and $v_k$ to be in the same clique supernode, $N(v_i) \setminus \{v_i\}$ should be same as $N(v_k) \setminus \{v_k\}$ but this is violated as $v_j$ is connected to $v_k$ and not to $v_i$. Hence the contradiction.
\end{proof}

We now show that there is a polynomial-time algorithm that computes the optimal lossless summarization. 
%
Algorithm~\ref{alg:tau=1} given below proposes a global greedy strategy for finding the optimal summary. For each node $u$, the goal of  the algorithm is to find the biggest supernode that $u$ can be a part of. For the summary to be lossless, such a supernode has to be either an independent set or a clique.

\begin{algorithm}[]
    \caption{Finding the best summary for $\tau=1$}\label{alg:tau=1}
    \begin{algorithmic}[1]
        \State \textbf{Input:} $G = (V,E)$
        \State \textbf{Initialization: }{$Status[\forall v \in V] \gets \mbox{\sl False}$, $\mathcal{S}$ $\gets$ []}
       
        \For{$u \in V \land Status[u] = \mbox{\sl False}$}
            \State{$S(u) \gets \{u\}$}
            \State{$Status[u] \gets \mbox{\sl True}$}
            \For{$v \in V \land Status[v] = \mbox{\sl False}$}
                \If{$(N(u) = N(v)) \lor (N(u)\setminus \{v\} = N(v) \setminus \{u\})$} \label{condition}
                    \State{$S(u) \gets S(u) \cup \{v\}$}
                    \State{$Status[v] \gets \mbox{\sl True} $}
                \EndIf
            \EndFor
            \State{$\mathcal{S}.add(S(u))$}
        \EndFor
         \State{$\Call{BuildSuperEdges}{\mathcal{S}}$}
    \end{algorithmic}
\end{algorithm}

Condition $(N(u) = N(v))$ in line \ref{condition} of Algorithm \ref{alg:tau=1} states that, if nodes $u,v$ share the same neighborhood set, then they are part of an independent set and should be merged. 
Condition $(N(u)\setminus \{v\} = N(v) \setminus \{u\})$ in line \ref{condition} states that, if $u,v$ are connected by an edge and they share the same neighborhood set, if we exclude $u$ from $N(v)$ and $v$ from $N(u)$, then they are part of a clique and should be merged. 
Further, Lemma~\ref{lemma:either-clique-or-is} proved that these conditions are mutually exclusive. If none of these conditions holds true, then node $u$ should be in a supernode of size one.

\smallskip
\noindent
{\bf Building Superedges.}
Once the appropriate supernodes have been identified
we build superedges as follows. 
For each supernode $S$, an edge is added to another supernode $S'$ iff $u\in S$ and $v \in S'$ and $(u,v)\in E$. We refer to this process as BuildSuperEdges (last line of Algorithm~\ref{alg:tau=1}).

\begin{theorem}[\textbf{Tractability of lossless graph summarization}] Lossless graph summarization is in $P$. That is, Algorithm~\ref{alg:tau=1} computes the optimal solution in polynomial time for $\tau = 1$. 
\end{theorem}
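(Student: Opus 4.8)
The plan is to establish three separate facts: that Algorithm~\ref{alg:tau=1} runs in polynomial time, that its output is a genuinely lossless summary, and that no lossless summary can use fewer supernodes. The running-time claim is immediate: the two nested loops over $V$ perform $O(|V|^2)$ iterations, each dominated by a neighborhood comparison costing $O(|V|)$ (or $O(d_{avg})$ with sorted adjacency lists), and BuildSuperEdges scans the edge set once per supernode, so the whole procedure is polynomial. The substance lies in correctness and optimality, and for both I would work with the pairwise compatibility relation implicit in line~\ref{condition}: call $u$ and $v$ \emph{compatible} when either $N(u)=N(v)$ (the independent-set case) or $N(u)\setminus\{v\}=N(v)\setminus\{u\}$ with $u,v$ adjacent (the clique case).

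First I would prove a closure property: whenever two nodes $v,w$ are each compatible with a common node $u$, they are also compatible with each other, and of the same type as their compatibility with $u$. For the independent-set case this is a one-line neighborhood substitution; for the clique case I would write $N(v)=(N(u)\setminus\{v\})\cup\{u\}$ and the symmetric identity for $w$, and then verify $N(v)\setminus\{w\}=N(w)\setminus\{v\}=(N(u)\setminus\{v,w\})\cup\{u\}$ by direct set manipulation. The only remaining danger is a node that is independent-set-compatible with one neighbor and clique-compatible with another, which would break closure; but this is exactly the configuration ruled out by Lemma~\ref{lemma:either-clique-or-is}. Closure then guarantees that the set $S(u)$ gathered by the inner loop is internally consistent, i.e.\ a clique with a common external neighborhood or an independent set with a common neighborhood. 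Combined with the first structural lemma, this shows BuildSuperEdges (which adds a self-loop precisely when the supernode contains an internal edge) reconstructs each node's neighborhood exactly, so the summary is lossless.

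For optimality I would upgrade closure to the statement that compatibility is an \emph{equivalence relation} on $V$: reflexivity and symmetry are trivial, and transitivity is precisely closure together with Lemma~\ref{lemma:either-clique-or-is} to forbid mixing the two types along a chain. Its equivalence classes are the maximal mutually-compatible node sets, and I would argue that the algorithm outputs exactly these classes---everything compatible with the representative $u$ is collected, and by transitivity everything in $u$'s class is directly compatible with $u$. Conversely, in any lossless summary every supernode of size greater than one is, by the first lemma, a clique or independent set whose members share the appropriate neighborhood, hence pairwise compatible, hence contained in a single equivalence class. Therefore the partition induced by any lossless summary refines the equivalence partition, so its number of supernodes is at least the number of classes, which the algorithm attains.

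The main obstacle is the transitivity/closure argument, and specifically the interaction between the clique and independent-set cases: naively the compatibility relation is \emph{not} transitive, since an independent-set twin of $v$ and a clique twin of $v$ need not be related to each other. What rescues the argument is Lemma~\ref{lemma:either-clique-or-is}, which certifies that no single node simultaneously admits both kinds of twin; once the two relations are known not to mix, transitivity reduces to the two within-type computations above. The clique-type computation, with its careful bookkeeping of the removed elements $v$ and $w$ inside $N(u)$, is the one place where a routine but error-prone set calculation is unavoidable.
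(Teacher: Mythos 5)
Your proposal is correct, and it reaches the theorem by a genuinely different (and somewhat tighter) route than the paper. The paper argues node-by-node: it claims the supernode that Algorithm~\ref{alg:tau=1} builds around each vertex $u$ is the largest supernode $u$ can occupy in \emph{any} lossless summary, and then derives optimality by contradiction, asserting that a summary with fewer supernodes would force some node into a strictly larger supernode than the one the algorithm gives it. You instead make the pairwise compatibility relation of line~\ref{condition} the central object, prove it is an equivalence relation (closure within each type by direct set manipulation, and no mixing of types by Lemma~\ref{lemma:either-clique-or-is}), show the algorithm outputs exactly the equivalence classes, and conclude by noting that every lossless summary's partition refines the class partition and hence has at least as many parts. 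This buys you two things the paper leaves implicit. First, transitivity is precisely what guarantees the greedy collection is order-independent: without it one must worry that a node compatible with the current representative $u$ was already absorbed by an earlier representative, which would make the algorithm's supernode for $u$ smaller than the claimed maximum; your argument shows this cannot happen, whereas the paper never addresses it. Second, the paper's counting step (fewer supernodes implies some node's supernode strictly grows) is true but needs its own justification, e.g.\ summing $1/|S(u)|$ over all nodes to count parts; your refinement argument gives the inequality on the number of supernodes immediately. The trade-off is length: the paper's proof is shorter because it asserts maximality and the counting step without this scaffolding, while yours spells out the routine but error-prone clique-case set computation. Both proofs rest on the same two structural lemmas, so the difference is one of formalization rather than of underlying insight, but your version closes real gaps.
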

\begin{proof}

We claim that the supernode corresponding to any vertex $u \in V$ in the summary provided by Algorithm~\ref{alg:tau=1}, is the largest possible supernode for $u$ in any lossless summary. Suppose that $u$ is in an independent set supernode. All the other nodes inside that supernode must have the same neighbor set as $u$. Algorithm~\ref{alg:tau=1} greedily finds and adds all possible vertices $v \in V$ that have same neighborhood set as $u$ to the supernode. Hence, this must be the largest size possible.
An analogous argument applies for the case when $u$ is in a clique supernode. 

We now show that the Algorithm~\ref{alg:tau=1} produces an optimal lossless summary. For contradiction, let us assume that there exists an optimal lossless summary in which the number of supernodes is less than than the summary provided by Algorithm~\ref{alg:tau=1}. If so, there should exist at least one node $u \in V$ such that its supernode size in the optimal summary is larger than the its supernode size in the summary provided by Algorithm~\ref{alg:tau=1}. However, we proved in the previous paragraph that this can never happen and hence is a contradiction. 
Finally, it can be verified that 
the time complexity of Algorithm~\ref{alg:tau=1} is $O(V^2\Delta_{max})$, 
where $\Delta_{max}$ is the maximum degree of a node in $G$ and hence lossless summarization is in $P$.
\end{proof}

\subsection{Scalable Lossless Algorithm, G-SCIS}

Algorithm~\ref{alg:tau=1} is of $O(V^2 \Delta_{max})$ time complexity, which makes it impractical for large datasets. 
In this section, we propose an improved algorithm of $O(E)$ complexity, which uses hashing to speed up the process. 
We can break down the process into three parts: (a) finding candidate supernodes, (b) filtering supernodes, and (c) connecting superedges.  
A hash function is used to map each sorted neighbor set of the original graph into a number and all the nodes whose neighbor sets have same hash value are grouped into same candidate supernode.  

Note that the use of a hash function could result in candidate supernodes with false positives (i.e. two nodes that should not belong to same supernode might be present into one candidate supernode) but there cannot be false negatives (i.e. two nodes that must belong to same supernode cannot be in two different candidate supernodes). 
Of course, the probability of a false positive depends on the quality of the hash function used. 
In order to remove false positives, we further examine each candidate supernode for false positives, which are then filtered out into separate supernodes.  After this step all the supernodes are as they should be in an optimal summary and finally the superedges are added between them. 

In Algorithm \ref{alg:candidate-supernodes}, two different hash values ($h_c$ and $h_{i}$) are generated for the neighbor sets of each node. 
The nodes that have the same $h_c$ value (line \ref{line:hash-clique}) are grouped together to form candidate clique supernodes. 
Similarly, the nodes that have the same $h_{i}$ value (line \ref{line:hash-is}) are grouped together to form candidate independent set supernodes. 
Note that due to possible false positives, there can exist a node that is present in both a candidate independent set and a candidate clique at the same time. 
Finally, Algorithm~\ref{alg:candidate-supernodes} returns two hashmaps, $mapC$ and $mapI$,  where keys are hash values and buckets contain the set of nodes falling in the same candidate clique or independent set supernode.

Algorithm~\ref{alg:filter-supernodes} filters the candidate supernodes to become correct supernodes. 
For any candidate supernode, it selects a random node $u$, and, using its neighbourhood list, removes all the other nodes $v$ in that supernode for which an appropriate condition is not satisfied.
Namely, we have 
$N(u) \cup \{v\} = N(v)\cup \{u\}$ for the case of clique and
$N(u)  = N(v)$ for the case of independent set.
If the quality of the hash function is perfect, i.e. no false positives occur, then the while loop in line \ref{line:while-loop} executes only once and Algorithm~\ref{alg:filter-supernodes} is very efficient.
On the other hand, if there are false positives, then the loop will execute several times. In general, we observe that if the number of buckets is high (for our hash function we chose $2^{63}$ as number of buckets), then we only have very few false positives.   
 
Algorithm~\ref{alg:scalable-tau=1}
is the main algorithm that drives the whole process and produces the summary. 
It obtains the two hashmaps $mapC$ and $mapI$ using Algorithm~\ref{alg:candidate-supernodes} (line \ref{alg:scalable-tau=1:line3:candidatesupernodes}). 
It then removes the false positives using Algorithm~\ref{alg:filter-supernodes} (lines \ref{alg:scalable-tau=1:line:filter-cliques} and \ref{alg:scalable-tau=1:line:filter-is}). Lines \ref{alg:scalable-tau=1:line8} to \ref{alg:scalable-tau=1:line11}  handle the supernodes of size one. Finally, the superedges are built in line \ref{alg:scalable-tau=1:line:buildsuperedges}. 


\smallskip
\noindent
\textbf{Time and space complexity:} 
The work space requirement\footnote{Not considering the read-only input graph and the write-only summary graph.} of Algorithm~\ref{alg:candidate-supernodes} is only $O(V)$ due to the fact that two hashmaps $mapC$ and $mapI$ as well as list of supernodes $\mathcal{S}$ only use $O(V)$ space. The runtime is $O(E)$ as the hash function has to traverse each neighbor set of each node. 
Similarly, building superedges takes $O(V)$ space and $O(E)$ runtime. 
Algorithm~\ref{alg:filter-supernodes} takes $O(V)$ space. Its runtime, as mentioned above, depends on the quality of the hash function. 
For a perfect hash function (no false positives) this is $O(E)$. 
We observe very close to this order in practice even for simple hash functions as long as they have a large enough number of possible buckets, e.g. $2^{63}$, which is the number of possible long integers in a conventional programming language. 
To summarize, the (expected) runtime of Algorithm~\ref{alg:scalable-tau=1} is $O(E)$  and its work space requirement is $O(V)$.

\begin{algorithm}[]
    \caption{Candidate Supernodes} \label{alg:candidate-supernodes}
    \begin{algorithmic}[1]
        \State \textbf{Input:} $G=(V,E)$, $h$ \Comment{hash function to map list to number}
        \State {$mapC \gets \{ \}$ , $mapI \gets \{ \}$} \Comment{hash maps }
        \For{$v \in V$}
           \State{ $h_c \gets h((N(v) \cup \{v\})_{sorted})$} \label{line:hash-clique}
           \State{ $h_{i} \gets h(N(v)_{sorted})$} \label{line:hash-is}
           \State{ $mapC[h_c] \gets mapC[h_c] \cup \{v\} $}
           \State{ $mapI[h_{i}] \gets mapI[h_{i}] \cup \{v\} $}
        \EndFor
        \State \Return {$mapC , mapI$}
   
    \end{algorithmic}
\end{algorithm}

\begin{algorithm}[]
    \caption{Filter Supernodes} \label{alg:filter-supernodes}
    \begin{algorithmic}[1]
         \State \textbf{Input:} $map,type$ \Comment{map containing candidate supernodes}
        \State{${S}$ $\gets$ []} \Comment{list of filtered supernodes}
        \For{$ X \in values(map)$} \Comment{for each candidate supernode}
            \While{$X \neq \phi$} \label{line:while-loop}
                \State{$ u \gets$ remove-random-node($X$)}
                \If{type = clique} \label{line:filter-clique}
                    \State{$S(u) \gets \{v \in X \mid N(u) \cup \{v\} = N(v) \cup \{u\}$\} }
                \Else   $\mbox{ } S(u) \gets  \{v \in X \mid N(u) = N(v) $\} \label{line:filter-is}
                \EndIf
                \If{$\mathcal{S}(u)\neq \{u\}$} \label{alg:filter-supernodes:line9}
                    \State{$X \gets X \setminus \mathcal{S}(u)$}
                    \State{$\mathcal{S}.append(\mathcal{S}(u))$} \label{alg:filter-supernodes:line11}
                \EndIf
            \EndWhile
        \EndFor
        \State \Return $\mathcal{S}$
    \end{algorithmic}
\end{algorithm}

\begin{algorithm}[]
    \caption{Scalable algorithm for $\tau=1$}\label{alg:scalable-tau=1}
    \begin{algorithmic}[1]
        \State \textbf{Input:} $G = (V,E)$
         \State{$Status[\forall v \in V] \gets FALSE$ , $S \gets []$} \Comment{list of supernodes}
        \State{$mapC,mapI \gets$ \Call{CandidateSuperNodes}{$G$}} \label{alg:scalable-tau=1:line3:candidatesupernodes}
        \State{$C \gets \Call{FilterSuperNodes}{mapC,type=clique}$} \label{alg:scalable-tau=1:line:filter-cliques}
        \State{$I \gets \Call{FilterSuperNodes}{mapI,type=\small{independentset}}$} \label{alg:scalable-tau=1:line:filter-is}
        \State{$\mathcal{S}.append(C)$}
        \State{$\mathcal{S}.append(I)$}
        \For{$\mathcal{S}_i \in \mathcal{S}$} \label{alg:scalable-tau=1:line8}
            \For{$u \in \mathcal{S}_i$}
                {$Status[u] \gets \mbox{\sl True}$}
            \EndFor
        \EndFor
         \For{$u \in V$ \textbf{AND} $Status[u] =  \mbox{\sl False}$}
            \State{$\mathcal{S}.append(\{u\})$} \label{alg:scalable-tau=1:line11}
        \EndFor
        \State{$\Call{BuildSuperEdges}{\mathcal{S}}$} \label{alg:scalable-tau=1:line:buildsuperedges}

    \end{algorithmic}
\end{algorithm}

\subsection{How to query G-SCIS graph summaries?}

In general there are two ways to query graph summaries. 
The first is to reconstruct the original graph, then answer queries. 
Of course the reconstruction can be done incrementally and on-the-fly. 
For example, using neighborhood-queries as a primitive illustrates such a reconstruction (c.f.~\cite{5}). 
Obviously, the execution time of this approach is at least as expensive as querying the original graph. 

\looseness=-1
The second approach is to devise query answering algorithms that work directly on the summary graph and never reconstruct the original graph. This class of algorithms has the potential to produce significant gains in running time compared to executing the query on the original graph. 
Here we propose three algorithms for summaries produced by G-SCIS. 
They are for the problems of triangle enumeration, Pagerank, and shortest path queries, which form the basis for many graph-analytic tasks. 


\smallskip
\noindent
\textbf{Enumerating Triangles.} 
Triangle enumeration using G-SCIS summary is described in Algorithm~\ref{alg:TriangleEnum}. 
This algorithm can be extended to enumerating other types of graphlets, such as squares, 4-cliques, etc.
For simplicity we focus here on the case of triangles. 

\looseness=-1
As shown in Figure \ref{fig:triangle-types}, there are three different types of triangles in the summary:
\begin{description}
\item{(a)} those having all three vertices in the same supernode, 
\item{(b)} those having two vertices in one supernode and one in another, and 
\item{(c)} those having all vertices in different supernodes. 
\end{description}
The idea underlying Algorithm~\ref{alg:TriangleEnum} is to enumerate type-(a) and type-(b) triangles by iterating over the clique supernodes in $\mathcal{G}$ and generate all type-(c) triangles by considering all the supernodes (both cliques and independent sets). 

Let $X$ be a clique supernode.  
Type-(a) triangles from $X$ can be found by listing every subset of three vertices in $X$ (see lines~3 and~4).

Type-(b) triangles with two vertices in $X$ can be computed by iterating over all the super neighbors of $X$. Specifically, all such triangles can be computed by listing every subset of two vertices in $X$ combined with every subset of one vertex from a neighbor supernode $Y$ (lines \ref{alg:TriangleEnum:line5} and \ref{alg:TriangleEnum:line6}). 

Finally, any triangle enumeration algorithm can be used on the summary graph to find all the super triangles (triangles formed by three supernodes). Based on the super triangles, type-(c) triangles can be listed as follows. 
If $(X,Y,Z)$ is a super triangle, then all the corresponding type-(c) triangles can be listed by combining every choice of the first node from $X$, second node from $Y$, and third node from $Z$ (lines \ref{alg:TriangleEnum:line:superTriangles} to \ref{alg:TriangleEnum:line:enumTypeC}).

\begin{figure}
\subfigure[]{
    \resizebox{.07\textwidth}{!}{
        \begin{tikzpicture}[node distance = {0.25cm and .15cm}, every node/.style = {draw, circle}]
        \node[fill = green!40] (A3) at (5.5,0.2) {};
        \node[fill = green!40] (B3) at (5,-0.5) {};
        \node[fill = green!40] (C3) at (6,-0.5) {};
        \draw [](5.5,-0.25) circle (1cm);
        \draw[thick] (A3) --  (B3);
        \draw[thick] (A3) --  (C3);
        \draw[thick] (B3) --  (C3);
        \end{tikzpicture}
    }
}
\subfigure[]{
    \resizebox{.15\textwidth}{!}{
        \begin{tikzpicture}[node distance = {0.25cm and .15cm}, every node/.style = {draw, circle}]
        \node[fill = blue!40] (A1) at (0,0.1) {};
        \node[fill = blue!40] (B1) at (0,-0.6) {};
        \draw [](0,-0.25) circle (1 cm);
        
        \node[fill = red!40] (A2) at (2.5,-0.25) {};
        \draw [](2.3,-0.25) circle (1 cm);
        
        \draw[thick] (A1) --  (B1);
        \draw[thick] (A1) --  (A2);
        \draw[thick] (A2) --  (B1);
         
        \end{tikzpicture}
    }
 }
\subfigure[]{
    \resizebox{.12\textwidth}{!}{
        \begin{tikzpicture}[node distance = {0.25cm and .15cm}, every node/.style = {draw, circle}]
        
        \node[vertex][fill = green!40] (A) at (0,0) {};
        \node[vertex][fill = blue!40] (B) at (3,0) {};
        \node[vertex][fill = red!40] (C) at (1.5,1) {};
        \draw [](0,0) circle (0.75cm);
        \draw [](3,0) circle (0.75cm);
        \draw [](1.5,1) circle (0.75cm);
        
        \draw[thick] (A) --  (B);
        \draw[thick] (A) --  (C);
        \draw[thick] (B) --  (C);
        \end{tikzpicture}
    }
}
\caption{Three different type of triangles \label{fig:triangle-types}}
\end{figure}
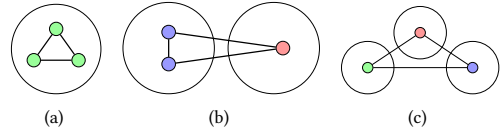

\begin{algorithm}[]
\caption{Enumerating Triangles}\label{alg:TriangleEnum}
\begin{algorithmic}[1]
    \State \textbf{Input:} $\mathcal{G}=(\mathcal{V},\mathcal{E})$ , triangle-enum \Comment{State of the art triangle enumeration algorithm}
    \For{$X \in \mathcal{V}$}
        \If{$X \in N{(X)}$} \Comment{$X$ has a superloop}
            \State{Output all \textbf{type a} triangles in $X$} \label{alg:TriangleEnum:line:enumTypeA} \Comment{$\binom{|X|}{3}$ triangles}
            \For{$Y \in \{N{(X)} \setminus X\}$} \label{alg:TriangleEnum:line5}
                \State \begin{varwidth}[t]{\linewidth}
                     Output all \textbf{type b} triangles having 2 vertices\\  \label{alg:TriangleEnum:line:enumTypeB}
                      in X and 1 vertex in Y \hspace{\algorithmicindent}\hspace{\algorithmicindent} \Comment{$\binom{|X|}{2}|Y|$ triangles}
                    \end{varwidth} \label{alg:TriangleEnum:line6}
            \EndFor
        \EndIf
        \State{super-triangles $\gets$ \textbf{triangle-enum}($\mathcal{G})$} \label{alg:TriangleEnum:line:superTriangles}
        \For{$(X,Y,Z) \in$ super-triangles}
            \State {Output all \textbf{type c} triangles in (X,Y,Z)} \label{alg:TriangleEnum:line:enumTypeC} \Comment{|X||Y||Z|  triang.} 
        \EndFor
    \EndFor
\end{algorithmic}
\end{algorithm}

\smallskip
\noindent
\textbf{Runtime Analysis.}
The running time of Algorithm~\ref{alg:TriangleEnum} is
$O(\mathcal{E}^{1.5}+\Delta)$, where $\Delta$ is the number of triangles in $G$. 
The first term, $\mathcal{E}^{1.5}$, is because of line \ref{alg:TriangleEnum:line:superTriangles}, whereas the second term, $\Delta$, is because of the enumeration we perform in lines \ref{alg:TriangleEnum:line:enumTypeA}, \ref{alg:TriangleEnum:line:enumTypeB}, and \ref{alg:TriangleEnum:line:enumTypeC}.
The running time of the rest of the steps of the algorithm add up to $O(\mathcal{V} + \mathcal{E})$ time which is absorbed by $O(\mathcal{E}^{1.5}+\Delta)$. Therefore, the latter expression gives us the running time of Algorithm~\ref{alg:TriangleEnum}. 

However, if the task is just counting the number of triangles, then lines \ref{alg:TriangleEnum:line:enumTypeA}, \ref{alg:TriangleEnum:line:enumTypeB}, and \ref{alg:TriangleEnum:line:enumTypeC} are $O(1)$ operations (calculating the triangle numbers given in code-comments) and the running time is $O(\mathcal{E}^{1.5})$.
As  $|\mathcal{E}|$ is significantly smaller than $|E|$ and the fact that enumerating triangles directly on $G$ requires $O(|E|^{1.5})$ operations,  performing triangle enumeration directly on $\mathcal{G}$ makes it faster. 
In experiments, we compare the running time of enumerating triangles using G-SCIS summary versus using the original graph. 
We employ a state-of-the-art algorithm for triangle enumeration \cite{santoso2019triad} and 
validate our claim.


\medskip
\looseness=-1
\noindent
\textbf{Computing Pagerank.} 
Another interesting application of the G-SCIS summary is that it can be used to find the Pagerank scores of all nodes in $G$ without reconstructing $G$. 
Before describing our approach, we give the definition of Pagerank and state a nice property of G-SCIS summary in Theorem \ref{thm:same-Pagerank} below.

Let $P^{i}(u)$ denote the Pagerank value of any node $u$ after $i$-th iteration of the Pagerank algorithm \cite{Page1999ThePC}. For any undirected graph $G=(V,E)$, all the nodes are initialized with the same Pagerank value i.e. $\forall_{u \in V}P^0(u)$ = 1. In iteration $i$, it is updated as follows:
\begin{equation} \label{eq:pr-definition}
    P^{i}(u) \gets \sum_{w \in N(u)}{\frac{P^{i-1}(w)}{|N(w)|}}  
\end{equation}

In Equation \ref{eq:pr-definition}, we ignore damping factor for simplicity  but it can be easily incorporated without impacting our results.

\begin{theorem} \label{thm:same-Pagerank}
For any supernode $S \in \mathcal{V}$, all the nodes inside $S$ must have the same Pagerank value.
\end{theorem}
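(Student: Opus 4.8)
The plan is to prove the statement by induction on the iteration index $i$ of the Pagerank recurrence in Equation~\ref{eq:pr-definition}, after strengthening the claim to the following uniform statement: for every supernode $S \in \mathcal{V}$ and all $u,v \in S$, we have $P^{i}(u) = P^{i}(v)$. The base case $i=0$ is immediate, since every node is initialized to the same value $1$. For the inductive step I would assume the strengthened claim holds at iteration $i-1$ for \emph{every} supernode simultaneously, and then establish it at iteration $i$. The engine of the argument is the structural characterization of lossless supernodes established earlier: every supernode is a singleton, a clique, or an independent set. The singleton case is vacuous, so only the two nontrivial cases remain.

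For the independent-set case, any two nodes $u,v$ in the same supernode satisfy $N(u)=N(v)$, which is exactly the merging condition used to form such a supernode. Since the recurrence defining $P^{i}(u)$ depends only on the neighbor set of $u$ and the previous-iteration values of those neighbors, the two sums defining $P^{i}(u)$ and $P^{i}(v)$ are term-by-term identical, and therefore $P^{i}(u)=P^{i}(v)$. Note that this case does not even invoke the inductive hypothesis; equal neighborhoods force equal Pagerank at every iteration directly.

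The clique case is where the real work lies, and I expect it to be the main obstacle. Here $u$ and $v$ are adjacent, so their neighborhoods are \emph{not} equal; instead the clique merging condition gives $N(u)\setminus\{v\} = N(v)\setminus\{u\}$. Writing $W = N(u)\setminus\{v\} = N(v)\setminus\{u\}$, we have $N(u) = W \cup \{v\}$ and $N(v) = W \cup \{u\}$, and in particular $|N(u)| = |N(v)|$. Splitting each sum into the common part over $W$ and the single differing term yields $P^{i}(u) = \sum_{w \in W}\frac{P^{i-1}(w)}{|N(w)|} + \frac{P^{i-1}(v)}{|N(v)|}$ and $P^{i}(v) = \sum_{w \in W}\frac{P^{i-1}(w)}{|N(w)|} + \frac{P^{i-1}(u)}{|N(u)|}$. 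The summations over $W$ coincide, and the inductive hypothesis applied to the supernode containing both $u$ and $v$ gives $P^{i-1}(u)=P^{i-1}(v)$, while $|N(u)|=|N(v)|$ makes the denominators equal; hence the two leftover terms match and $P^{i}(u)=P^{i}(v)$. The crux is recognizing that the adjacency of $u$ and $v$, which spoils the naive "identical sums" argument, is exactly compensated by the equality of their degrees together with the inductive hypothesis. This compensation is precisely why the claim must be carried through a simultaneous induction over all supernodes across iterations, rather than analyzed one iteration in isolation.
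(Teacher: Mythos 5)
Your proposal is correct and is essentially the paper's own proof: the same case split into independent-set and clique supernodes, the same use of $N(u)=N(v)$ in the first case, and the same use of $N(u)\setminus\{v\}=N(v)\setminus\{u\}$ with $|N(u)|=|N(v)|$ plus induction on the iteration index in the clique case. The only cosmetic difference is that you phrase it as one simultaneous induction over all supernodes, while the paper inducts per pair, which suffices since the common terms cancel term-by-term regardless of which supernodes the neighbors lie in.
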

\begin{proof}
As the supernodes in the lossless summary $\mathcal{G}$ either represent an independent set or a clique, we show that in both cases this property holds true. 
\begin{enumerate}
    \item For any two nodes $u,v$ in an independent set supernode, $N(u)$ is exactly the same as $N(v)$ and according to Equation~\ref{eq:pr-definition}, $P^{i}(u)$ = $P^{i}(v)$.
    \item For any two nodes $u,v$ in a clique supernode, Equation \ref{eq:pr-definition} can be rewritten as follows: 
    \begin{equation} \label{eq: pr-clique-definition}
        \begin{split}
            P^{i}(u) \gets \frac{P^{i-1}(v)}{|N(v)|} + \sum_{w \in N(u) \setminus \{v\}}{\frac{P^{i-1}(w)}{|N(w)|}} \\
            P^{i}(v) \gets \frac{P^{i-1}(u)}{|N(u)|} + \sum_{w \in N(v) \setminus \{u\}}{\frac{P^{i-1}(w)}{|N(w)|}}  
        \end{split}
    \end{equation}
    
    From the properties of clique supernode, $N(u) \setminus \{v\}$ = $N(v) \setminus \{u\}$ and  $|N(u)| = |N(v)|$. Thus, it can be seen from Equation \ref{eq: pr-clique-definition} that  $P^{i}(u) = P^{i}(v)$ if and only if $P^{i-1}(u) = P^{i-1}(v)$.
    
    $P^{0}(v) = P^{0}(u)$ and for any iteration $k$, if we assume $P^{k}(u) = P^{k}(v)$, Equation \ref{eq: pr-clique-definition} implies that $P^{k+1}(u) = P^{k+1}(v)$. Thus, by induction,  $P^{i}(u) = P^{i}(v)$ for all $i$.
\end{enumerate}
\end{proof}
\looseness=-1
 To calculate the exact Pagerank scores of the nodes in $G$ using its summary $\mathcal{G}$, we propose Algorithm \ref{alg:Pagerank-gsic}, an adaptation of the Pagerank algorithm, that runs directly on $\mathcal{G}$ and prove its correctness in Theorem \ref{thm:Pagerank-same-result}. 
Algorithm \ref{alg:Pagerank-gsic} maintains the invariant that the Pagerank of a supernode after iteration $i$ is the sum of the Pagerank of its nodes after iteration $i$ of the Pagerank algorithm. It initializes the Pagerank of a supernode to be its size (line \ref{alg:Pagerank-gsic:line:initialization}). It computes the number of neighbours of a node inside a supernode $X$ (lines \ref{alg:Pagerank-gsic:line4} and \ref{alg:Pagerank-gsic:line5}). Using this, it updates the Pagerank of supernode $X$ in iteration $i$ (line \ref{alg:Pagerank-gsic:line8} to \ref{alg:Pagerank-gsic:line11}). Finally, it computes the Pagerank of each node of $G$ from the Pagerank of its supernode in $\mathcal{G}$ (line \ref{alg:Pagerank-gsic:line14}).
\begin{algorithm}[]
\caption{Pagerank using G-SCIS summary}\label{alg:Pagerank-gsic}
\begin{algorithmic}[1]
    \State \textbf{Input: } $\mathcal{G}=(\mathcal{V},\mathcal{E})$ 
    \State \textbf{Initialization:} $\forall{X \in \mathcal{V}}, {P^0(X)=|X|} , i \gets 1$ \label{alg:Pagerank-gsic:line:initialization}
    \For{$X \in \mathcal{V}$}
        \If{$X \notin N(X)$} 
            \state $W(X) \gets \sum_{Y \in N^-(X)}{|Y|}$  \Comment {X is IS} \label{alg:Pagerank-gsic:line4}
        \Else
            \state{$W(X) \gets \sum_{Y \in N^-(X)}{|Y|} + (|X|-1)$} \Comment{X is clique} \label{alg:Pagerank-gsic:line5}
        \EndIf
    \EndFor
    \While{$P^i \neq P^{i-1} $} \Comment{until convergence}
        \For{$X \in \mathcal{V}$}
            \If{$X \notin N(X)$} \Comment{X is IS} \label{alg:Pagerank-gsic:line8}
            \State{$P^{i}(X) \gets \sum_{Y \in N^-(X)}\frac{|X| \cdot P^{i-1}(Y)}{W(Y)}$}
            \Else \Comment{X is clique}
                \State{$P^{i}(X) \gets \sum_{Y \in N^-(X)}\frac{|X| \cdot P^{i-1}(Y)}{W(Y)}+ \frac{(|X|-1) \cdot P^{i-1}(X)}{W(X)}$}
                \label{alg:Pagerank-gsic:line11}
            \EndIf
        \EndFor
    \EndWhile
    
    \For{$X \in \mathcal{V}$}
        \For{$u \in nodes(X)$}
            \State{$P(u) \gets \frac{P^i(X)}{|X|}$} \label{alg:Pagerank-gsic:line14}
        \EndFor
    \EndFor
\State return $P$    
\end{algorithmic}
\end{algorithm}





\begin{theorem} \label{thm:Pagerank-same-result}
Algorithm \ref{alg:Pagerank-gsic} outputs exactly the same Pagerank score for each node $v$ in $G$ as the Pagerank algorithm.
\end{theorem}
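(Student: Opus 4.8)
The plan is to prove, by induction on the iteration count $i$, the loop invariant stated informally just before the algorithm: after iteration $i$, the value $P^i(X)$ maintained by Algorithm~\ref{alg:Pagerank-gsic} equals the sum $\sum_{u \in nodes(X)} P^i(u)$ of the true Pagerank values (Equation~\ref{eq:pr-definition}) of the nodes of $X$ after iteration $i$ of the ordinary Pagerank algorithm run on $G$. Once this invariant is established, the final extraction step (line~\ref{alg:Pagerank-gsic:line14}) together with Theorem~\ref{thm:same-Pagerank} immediately yields the claim: since all nodes in $X$ share a common true Pagerank value, that common value must be $P^i(X)/|X|$, which is exactly what the algorithm assigns to each $u \in nodes(X)$.

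First I would record the one fact that makes everything align: for any node $u$ in a supernode $X$, the quantity $W(X)$ computed in lines~\ref{alg:Pagerank-gsic:line4}--\ref{alg:Pagerank-gsic:line5} equals the true degree $|N(u)|$. When $X$ is an independent set, the neighbors of $u$ are exactly the nodes of the neighbor supernodes, so $|N(u)| = \sum_{Y \in N^-(X)} |Y|$; when $X$ is a clique, $u$ is additionally adjacent to the other $|X|-1$ nodes of $X$, which accounts for the extra term. The base case of the induction is then immediate: the initialization $P^0(X) = |X|$ (line~\ref{alg:Pagerank-gsic:line:initialization}) equals $\sum_{u \in nodes(X)} P^0(u)$ because every node starts with Pagerank $1$.

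For the inductive step I would assume the invariant at iteration $i-1$ for every supernode and treat the two cases separately. For an independent-set supernode $X$, each $u \in nodes(X)$ satisfies $P^i(u) = \sum_{Y \in N^-(X)} \sum_{w \in nodes(Y)} P^{i-1}(w)/|N(w)|$; replacing $|N(w)|$ by $W(Y)$ and using the inductive hypothesis $\sum_{w \in nodes(Y)} P^{i-1}(w) = P^{i-1}(Y)$ collapses the inner sum to $P^{i-1}(Y)/W(Y)$, and summing over the $|X|$ identical nodes of $X$ reproduces exactly the IS update. The harder, and main, case is the clique supernode, where $u$ also receives contributions from its $|X|-1$ siblings inside $X$. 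Here I would invoke Theorem~\ref{thm:same-Pagerank}: since all siblings carry the common value $P^{i-1}(X)/|X|$ and each has true degree $W(X)$, their total contribution to $P^i(u)$ is $(|X|-1)(P^{i-1}(X)/|X|)/W(X)$; multiplying the per-node value by $|X|$ to form $P^i(X)$ turns this into the term $(|X|-1)P^{i-1}(X)/W(X)$ appearing in line~\ref{alg:Pagerank-gsic:line11}, while the contributions from $N^-(X)$ aggregate exactly as in the IS case.

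I expect the clique case to be the main obstacle, for two reasons: one must carefully separate the self-loop (intra-supernode) contribution from the inter-supernode contribution so that the aggregate $P^{i-1}(X)$ is not double-counted, and one must lean on Theorem~\ref{thm:same-Pagerank} to rewrite the sum over siblings in terms of $P^{i-1}(X)$ rather than individual node values. A closing remark I would add is that the argument is carried out for the damping-free recurrence of Equation~\ref{eq:pr-definition}; a damping factor contributes the same constant additive term to every node and is absorbed identically at the supernode level, so it affects neither the invariant nor the conclusion.
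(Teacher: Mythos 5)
Your proof is correct and takes essentially the same route as the paper: both arguments rest on Theorem~\ref{thm:same-Pagerank}, on the observation that $W(X)$ equals the common degree of every node in $X$, and on rewriting the per-node Pagerank recurrence as a recurrence on supernode aggregates, split into the independent-set and clique cases. The only difference is presentational --- you spell out explicitly the induction over iterations that the paper leaves implicit when it asserts that the algorithm's output coincides with Equation~\ref{eq:pr-definition-modified-modified}.
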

\begin{proof}
Let $N^-(X)$ denote the set $N(X) \setminus \{X\}$.
Replacing the role of $G$ with $\mathcal{G}$, Equation \ref{eq:pr-definition} can be rewritten as follows:
\begin{equation} \label{eq:pr-definition-modified}
P^i(u) = \begin{cases}
    \sum\limits_{Y \in N^-(S(u))}{\sum\limits_{ w \in Y}{\frac{P^{i-1}(w)}{|N(w)|}}}
    &\text{$(S(u)$ is IS)} \\
    \sum\limits_{Y \in N^-(S(u))  }{\sum\limits_{ w \in Y}{\frac{P^{i-1}(w)}{|N(w)|}}} + \sum\limits_{\substack{w \in S(u) \\ w \neq u }}{\frac{P^{i-1}(w)}{|N(w)|}}
    &\text{ $(S(u)$ is clique)} 
    \end{cases}
\end{equation}
From Theorem \ref{thm:same-Pagerank}, all the nodes in a supernode S have same Pagerank scores. Hence, $P(X) = |X|\cdot P(w)$ for any node $w \in X$.
Also, observe that all the nodes in a supernode have same number of neighbors in $G$. i.e $\forall_{u,v \in X}{|N(u)|=|N(v)|}$. Let $W(X)$ represent the number of neighbors of any node inside a supernode $X$. Then, 

\begin{equation}\label{eq:w(s)-def}
 W(X)= \begin{cases}
    \sum\limits_{Y \in N^-(X)}{|Y|}  
    &\text{($X$ is independent set)} \\
     \sum\limits_{Y \in N^-(X)}{|Y|} + (|X|-1) 
    &\text{($X$ is clique)}
    \end{cases}
\end{equation}

Using Equation \ref{eq:w(s)-def}, Equation \ref{eq:pr-definition-modified} can be finally rewritten as
\begin{equation} \label{eq:pr-definition-modified-modified}
     P^i(u)= \begin{cases}
    \sum\limits_{Y \in N^-(S(u)) }{\frac{P^{i-1}(Y)}{W(Y)}}
    &\text{($S(u)$ is IS)} \\
    \sum\limits_{Y \in N^-(S(u)) }{\frac{P^{i-1}(Y)}{W(Y)}}  +\frac{(|S(u)|-1) \cdot P^{i-1}{(S(u))}}{|S(u)| \cdot W(S(u))}
    &\text{($S(u)$ is clique)}
    \end{cases}
\end{equation}
The output of Algorithm \ref{alg:Pagerank-gsic} (line \ref{alg:Pagerank-gsic:line14}) is exactly the same as Equation~\ref{eq:pr-definition-modified-modified}. The correctness follows.
\end{proof}

\smallskip
\noindent
\looseness=-1
\textbf{Runtime Analysis.} 
The time complexity of the Pagerank algorithm is $O(T(|V|+|E|))$ where $T$ is the number of iterations required for convergence. 
If we run Algorithm \ref{alg:Pagerank-gsic} on $\mathcal{G = (V,E)}$, the running time is $O(|V|+T(\mathcal{|V|+|E|}))$ as the number of iterations are the same in both the algorithms. 
Thus, the running time is reduced from $O(T(|V|+|E|))$ to $O(|V|+T(\mathcal{|V|+|E|}))$. 

\smallskip
\noindent
\textbf{Computing Shortest Paths.} 
We observe that G-SCIS summary, $\mathcal{G}$, can be used to compute lengths of shortest paths between any two nodes $u,v \in G$ (as an unweighted graph). To find the shortest paths in $G$, BFS can be executed directly on  $\mathcal{G}$ reducing runtime from $O(|E|+|V|)$ to $O(\mathcal{|E|+|V|})$. 
We present the following two theorems. 

\begin{theorem}
Given nodes $u,v$ such that $S(u) = S(v)$, the following hold.
\begin{enumerate}
\item If $S(u)$ is a clique, the shortest path length between $u$ and $v$ in $G$, $d(u,v)$, is $1$. 
\item If $S(u)$ is an independent set and $|N(S(u))|>0$, then $d(u,v) = 2$. Otherwise, $d(u,v) = \infty$.
\end{enumerate}
\end{theorem}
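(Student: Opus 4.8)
The plan is to reduce both parts to the structural characterization of supernodes in a lossless summary (every supernode is either a clique or an independent set) together with the reconstruction semantics of superedges described in the Preliminaries. I would begin by recording two facts explicitly, since each case then follows almost immediately: first, a clique supernode carries a self-loop superedge and thus reconstructs into a clique on its vertices, so any two of its vertices are adjacent in $G$; second, an independent set supernode carries no self-loop, so its vertices are pairwise non-adjacent, while a superedge from it to another supernode $Y$ reconstructs into a complete bipartite graph between the two parts.

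For part~(1), when $S(u)=S(v)$ is a clique supernode, the reconstruction of its self-loop places an edge between every pair of vertices of $S(u)$; in particular $(u,v)\in E$. Since $u\neq v$, no shorter path exists, and hence $d(u,v)=1$.

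For part~(2), suppose $S(u)=S(v)$ is an independent set supernode. I would first observe that $u$ and $v$ are non-adjacent in $G$ (there is no self-loop on the supernode), so $d(u,v)\neq 1$. If $|N(S(u))|>0$, then $S(u)$ has a neighbor supernode $Y$ in $\mathcal{G}$; the superedge between $S(u)$ and $Y$ reconstructs into a complete bipartite graph, so every vertex $w\in Y$ is adjacent to both $u$ and $v$. The two-edge path through such a $w$ certifies $d(u,v)\le 2$, and combined with $d(u,v)\neq 1$ this yields $d(u,v)=2$. If instead $|N(S(u))|=0$, then neither $u$ nor $v$ has any incident edge, so they lie in distinct connected components of $G$ and $d(u,v)=\infty$.

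The argument is essentially a direct unwinding of definitions, so I do not anticipate a genuine obstacle. The only points requiring care both arise in the length-two subcase: I must invoke the superedge reconstruction semantics to exhibit a single vertex $w$ that is simultaneously adjacent to $u$ and to $v$ (rather than merely asserting that each has some neighbor), and I must use the non-adjacency of vertices within an independent set supernode to rule out distance one, so that the upper bound $d(u,v)\le 2$ is actually attained.
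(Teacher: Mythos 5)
Your proof is correct. Note, however, that the paper states this theorem \emph{without} any proof at all (only the companion theorem for the case $S(u)\neq S(v)$ is given a proof), so there is no paper argument to compare against; your direct unwinding of the reconstruction semantics --- a self-loop superedge reconstructs a clique, the absence of a self-loop forces pairwise non-adjacency, and a superedge to a neighbor supernode $Y$ reconstructs a complete bipartite graph yielding a common neighbor $w$ adjacent to both $u$ and $v$ --- is precisely the argument the authors implicitly rely on, and it correctly handles the two points that actually need care: ruling out $d(u,v)=1$ in the independent-set case, and exhibiting a single common neighbor (rather than separate neighbors) to attain $d(u,v)=2$.
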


\begin{theorem}
Given nodes $u,v$ such that $S(u) \neq S(v)$, $d(u,v)$ is equal to the length of shortest path between $S(u)$ and $S(v)$ in $\mathcal{G}$. 
\end{theorem}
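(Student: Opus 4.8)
The plan is to prove the equality by establishing the two inequalities $d(u,v) \ge d_{\mathcal{G}}(S(u),S(v))$ and $d(u,v) \le d_{\mathcal{G}}(S(u),S(v))$ separately, where $d_{\mathcal{G}}(S(u),S(v))$ denotes the shortest-path length between the two supernodes in $\mathcal{G}$. Both directions rest on a single structural fact about lossless summaries that I would isolate at the outset.

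\textbf{Key property (homogeneity of superedges).} Because the summary is lossless, reconstructing from $\mathcal{G}$ reproduces $G$ exactly, and reconstruction turns each superedge $(S_i,S_j)$ with $i \neq j$ into a complete bipartite graph between the two parts. Hence a superedge between two \emph{distinct} supernodes is present if and only if every node of $S_i$ is adjacent in $G$ to every node of $S_j$, and its absence means no edges at all run between $S_i$ and $S_j$. I would record this as a lemma, since both directions hinge on it.

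\textbf{Upper bound on the summary distance.} Let $u = x_0, x_1, \ldots, x_\ell = v$ be a shortest path in $G$, so $\ell = d(u,v)$. Mapping each $x_i$ to its supernode gives the sequence $S(x_0), \ldots, S(x_\ell)$. For each edge $(x_i,x_{i+1}) \in E$, either $S(x_i) = S(x_{i+1})$ (the edge lies inside a clique supernode) or $S(x_i) \neq S(x_{i+1})$, in which case homogeneity guarantees a superedge between them. Deleting consecutive repetitions then leaves a walk in $\mathcal{G}$ from $S(u)$ to $S(v)$ of length at most $\ell$, giving $d_{\mathcal{G}}(S(u),S(v)) \le d(u,v)$.

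\textbf{Lower bound.} Conversely, let $S(u) = T_0, T_1, \ldots, T_k = S(v)$ be a shortest path in $\mathcal{G}$, with $k = d_{\mathcal{G}}(S(u),S(v))$; all consecutive $T_i$ are distinct, since self-loops never shorten a path. By homogeneity, every node of $T_i$ is adjacent in $G$ to every node of $T_{i+1}$. I would lift the path greedily: set $y_0 = u$, pick any $y_i \in T_i$ for $0 < i < k$, and set $y_k = v$; each consecutive pair $(y_i,y_{i+1})$ is then necessarily an edge of $G$. This yields a walk of length $k$ from $u$ to $v$, so $d(u,v) \le k = d_{\mathcal{G}}(S(u),S(v))$. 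Combining the two inequalities gives the claimed equality (and the argument also matches the $\infty = \infty$ case when $S(u),S(v)$ lie in different components, as each construction would otherwise produce a nonexistent path on the other side).

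\textbf{Main obstacle.} The only real subtlety — the step I would treat most carefully — is homogeneity and its interaction with self-loops: one must verify that intra-clique edges collapse harmlessly in the first direction and that a shortest path in $\mathcal{G}$ never needs a self-loop in the second, so that the lift places a node in each required supernode and terminates exactly at $v$. Once homogeneity is in hand, both bounds are essentially mechanical.
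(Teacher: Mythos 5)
Your proof is correct, and it is actually more complete than the paper's own argument, although both rest on the same underlying rigidity that losslessness imposes on the summary. The paper takes a different shortcut: its key observation is that two nodes in the same supernode have identical connectivity, so a shortest path in $G$ can never visit a supernode twice; projecting such a path to $\mathcal{G}$ therefore preserves its length, and the paper concludes directly that ``the lengths of both paths are the same.'' That conclusion silently uses the other inequality --- that a shortest path in $\mathcal{G}$ lifts back to a path in $G$ of the same length --- which is precisely your lower-bound step, left implicit in the paper. Your route instead isolates the homogeneity of superedges (a superedge between distinct supernodes means complete bipartite adjacency in $G$, and its absence means no edges at all) and proves the two inequalities explicitly: projection with collapsing of consecutive repetitions gives $d_{\mathcal{G}}(S(u),S(v)) \le d(u,v)$ without ever needing the no-repetition observation, and the greedy lift gives the converse. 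What the paper's observation buys is brevity plus a structural insight of independent interest (shortest paths in $G$ meet each supernode at most once); what your two-sided argument buys is rigor --- the lifting direction is stated and proved rather than implied, and the disconnected ($\infty = \infty$) case falls out for free.
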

\begin{proof}
We observe that any two nodes on the shortest path between $u$ and $v$ in $G$ cannot be in the same supernode of $\mathcal{G}$. Otherwise, as nodes in a supernode have the same connectivity, it can be seen that that path is not the shortest. Thus, the shortest path between $u$ and $v$ can pass only once through each supernode on the shortest path between $S(u)$ and $S(v)$ and hence the lengths of both paths are the same.
\end{proof}

Based on the above theorems we present Algorithm~\ref{alg:shortestpath-gsic} for computing $d(u,v)$ given two nodes $u,v\in V$ using a G-SCIS summary.

\begin{algorithm}[]
\caption{Shortest Paths using G-SCIS summary}\label{alg:shortestpath-gsic}
\begin{algorithmic}[1]
    \State \textbf{Input: } $\mathcal{G}=(\mathcal{V},\mathcal{E}), u,v\in V$ 
        \If{$S(u) = S(v)$} 
           \If{$S(u)$ is a clique}
                \state $d(u,v) = 1$
            \Else
                \If{$N(S(u) > 0$}
                    \state $d(u,v) = 2$
                \Else \state $d(u,v) = \infty$
                \EndIf
            \EndIf
        \Else
            \state{$d(u,v) = d(S(u), S(v))$}
        \EndIf
\State return $d(u,v)$
\end{algorithmic}
\end{algorithm}

\section{Scalable Lossy Algorithm}
Kumar et al. \cite{1} proposed UDS, a lossy algorithm for Utility Driven graph Summarization. 
We introduce UDS briefly because of its relevance to
our work, and highlight some of its limitations. 

\looseness=-1
UDS is a greedy iterative algorithm that starts with the original graph $G=(V,E)$ and iteratively merges nodes until the utility of the graph drops below a user-specified threshold $\tau < 1$. 
Intuitively, it is desirable that any two nodes in the same supernode have similar neighborhoods. 
A good starting point advocated in \cite{1} for the greedy algorithm is to look at the two-hop away nodes, as they have at least one neighbor in common, and merge them together in a supernode. 
To decide the order of the merge operations, UDS considers the set of all two-hop away nodes as the candidate pairs. Call this set $F$.
The algorithm starts merging from the less central candidate pairs in $F$ because they result in less damage to the utility. 
Towards this, UDS uses a centrality score for each node in the graph to assign a weight to each candidate pair $\langle u,v \rangle$, e.g. $C_u + C_v$ and sorts them in ascending order. 
UDS iterates over the sorted candidate pairs and in each iteration performs the following steps.
%
\begin{enumerate}
     \item Pick the next pair of candidate nodes $\langle u,v\rangle$ from $F$, find their corresponding supernodes  $S(u),S(v)$, and merge them into a new supernode $S$, if $S(u)\neq S(v)$.
    \item Update the neighbors of $S$ based on the neighbors of $S(u),S(v)$. In particular, add an edge from $S$ to another supernode if the loss in utility is less than the loss if not added. 
    
    \item (Re)compute the utility of the summary built so far and stop if the threshold is reached. 
\end{enumerate}


UDS needs $O(|F|\lg{(|F|)})$ time and $O(|F|)$ space to compute and sort $F$. Merging two supernodes takes $O(V)$ time and there can be $O(V)$ such merges. Therefore, merge steps together require $O(|V|^2)$ time and $(|E|)$ space. Thus, the time complexity of UDS is  $O(|V|^2 + |F| \cdot \lg(|F|))$ and its space complexity is $O(|F|)$.
This complexity, particularly running time, makes UDS impractical for large graphs.

    




Now, we introduce our new algorithm that overcomes the limitations of UDS. It makes use of two techniques, constructing a minimum spanning tree (MST) of the two-hop graph and performing a binary search on the list of MST edges.
We call our approach T-BUDS.\footnote{msT-Binary search based Utility Driven graph Summarization (T-BUDS).} %
As shown in the experiments section, T-BUDS outperforms UDS by two orders of magnitude on several moderate datasets, while on 
bigger ones, it is only T-BUDS that can complete the computation in reasonable 
time (UDS could not). 



\subsection{Generating Candidate Pairs using MST \label{subsect:2-hopCandidate}} 
Recall that UDS considers the set of all two-hop away nodes as candidate pairs starting from the less central to more central pairs. However, not every candidate pair will cause a merge. This is because the nodes in the pair can be already in a supernode together due to previous merges. Therefore, there are many useless pairs, which we eliminate with our MST technique below. 

We denote the two hop graph by $G_{2-hop} =(V,F)$. 
That is, 
$F = \{(a,c)|(a,b)\in E \mbox{ and } (b,c) \in E\}$.
We do not construct it explicitly as UDS does. 
We propose a method to reduce the number of candidate  pairs from $O(|F|)$ to $O(|V|)$ by creating an MST of $G_{2-hop}$.
In Theorem~\ref{thm:mst-sufficiency}, we prove that using  the sorted edge list of MST of $G_{2-hop}$ will produce exactly the same summary as using the sorted edge list of $G_{2-hop}$. 

\looseness=-1
Let us denote by $L$ the centrality weight-based sorted version of $F$.
Also, we denote by $H$ the sorted list of edges of an MST for $G_{2-hop}$.  
We now present a sufficiency theorem, which says that using $H$ instead of $L$ as the list of candidates is sufficient. The idea of the proof is that the candidate pairs leading to a merge when $L$ is used, in fact, exactly correspond to the edges of an MST. 

\begin{theorem}[\textbf{MST Sufficiency Theorem}]\label{thm:mst-sufficiency}
For utility threshold $\tau$,
using $H$ as the list of candidate pairs will produce the same graph summary as using $L$.\footnote{
There can be different sorted versions of $L$ due to possible ties (albeit unlikely as weights are real numbers). 
What this theorem shows is that the summary constructed based on MST is the same as the summary constructed using {\em some} sorted version of~$L$.}
\end{theorem}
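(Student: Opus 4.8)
The plan is to establish a tight correspondence between the \emph{effective} merges induced by $L$ and the edges of an MST of $G_{2-hop}$, and then argue that feeding $H$ to UDS reproduces this sequence verbatim. First I would recast the merge process as a union-find computation: scanning the sorted list $L$, a candidate pair $(u,v)$ triggers an actual merge precisely when $S(u)\neq S(v)$, and is otherwise a no-op that leaves the partition unchanged. This is exactly the acceptance rule of Kruskal's algorithm run on $G_{2-hop}$ with the centrality weights---an edge is taken iff it joins two distinct components. Consequently the subsequence of merge-triggering pairs of $L$, in weight order, is precisely the edge set of the MST that Kruskal produces (a spanning forest if $G_{2-hop}$ is disconnected), and these are exactly the edges of $H$. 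This pins down the first half of the claim: the pairs that do useful work under $L$ are the MST edges.

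Next I would show that running UDS on $H$ passes through the identical sequence of partitions. The crucial point is that no-op pairs never alter the supernode partition, so the partition reached after the first $k$ merges is independent of how many non-merging pairs were interleaved. Since $H$ lists exactly the MST edges in the same weight order in which they occur within $L$, and since each such edge still connects two distinct supernodes when it is encountered (acyclicity of the MST is nothing but the Kruskal invariant), every edge of $H$ triggers a merge, and does so in the same order. Thus the $H$-run and the $L$-run traverse the same sequence of partitions.

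It then remains to argue that identical partitions force identical summaries and identical stopping behavior. Here I would observe that the superedge decisions in step~2 of UDS (add a superedge iff doing so costs less utility than omitting it, i.e.\ comparing $\Sedge$ against $\nSedge$) and the utility value of Equation~\ref{eq:utilitythreshold} are both functions of the current partition and the fixed input graph alone; they carry no dependence on the history of which candidate pairs were examined. Therefore the utility after $k$ merges agrees across the two runs, the threshold $\tau$ is first crossed at the same merge, and both runs halt with the same partition and the same set of superedges---hence the same summary.

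The step deserving the most care, and the main obstacle, is this last one: formally verifying that the per-merge superedge choices and the recomputed utility are genuine functions of the partition, independent of processing order (the incremental neighbor-update in step~2 must be shown to coincide with the from-scratch, per-supernode-pair decision). Once that invariant is locked down, the Kruskal correspondence supplies the rest. A secondary subtlety is ties among the weights: different tie-breakings can yield different MSTs, so the statement is naturally read against \emph{some} sorted version of $L$, namely the one matching the tie-breaking used to construct $H$, exactly as the footnote anticipates.
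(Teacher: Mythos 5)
Your proposal is correct and follows essentially the same route as the paper: both arguments identify the merge-triggering pairs of the $L$-run with the edges of an MST of $G_{2-hop}$ (the paper derives this via the cut property and a contradiction on an earlier merge, which is precisely the content of your Kruskal-acceptance observation) and both use the fact that $H$ and $L$ present these pairs in the same weight order, with ties deferred to the footnote in exactly the way you note. The only differences are ones of packaging: you invoke Kruskal's correctness wholesale rather than re-deriving it, and you make explicit the closing step—that superedge decisions and the utility value are functions of the current partition alone, so identical merge sequences force identical summaries and stopping points—which the paper leaves implicit.
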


\begin{proof}
  Initially $\mathcal{G}$ is same as $G$ and let us assume that at iteration $i$  a new pair  $\langle u ,v \rangle \gets L[i]$ is chosen and  ${S(u)}^{i-1}$ and ${S(v)}^{i-1}$ are their corresponding supernodes. If ${S(u)}^{i-1}\neq {S(v)}^{i-1}$ then they should be merged together into a new supernode. The following two claims need to be proven to ensure the sufficiency of $H$ as a candidate set.
  

\begin{enumerate}
     \item If two pairs $\langle u_1,v_1 \rangle$ and $\langle u_2,v_2 \rangle$ are in $H$
     such that $\langle u_1,v_1 \rangle$ appears before $\langle u_2,v_2 \rangle$ in $H$ then $\langle u_1,v_1 \rangle$ appears before $\langle u_2,v_2 \rangle$ in $L$.
     
     \item If $u$ and $v$ are not inside a same supernode, that is ${S(u)}^{i-1} \neq {S(v)}^{i-1}$, then $\langle u ,v \rangle$ must be in $H$.
     
 \end{enumerate}
 
 Proof of (1): As both $H$ and $L$ are sorted based on the weights of the edges, the order in which $\langle u_1,v_1 \rangle$ and $\langle u_2,v_2 \rangle$ appear in $H$ will be the same as their order in $L$.

 Proof of (2): $S(u)^{i-1} \neq S(v)^{i-1}$ implies that there does not exist any other pair  $\langle u',v' \rangle \gets L[j]$ for any $j<i$ such that $u' \in S(u)^{i-1}$ and $v' \in S(v)^{i-1}$. 
Otherwise, $S(u')^{j}$ would have been merged with $ S(v')^{j}$ in the $j$-th iteration. Thus, $u'$ and $v'$ would belong to the same supernode and $S(u)^{i-1}$ should be same as $S(v)^{i-1}$.
Hence, $\langle u ,v \rangle$ is the smallest weight edge in $G_{2-hop}$ connecting $S(u)^{i-1}$ and $S(v)^{i-1}$. 
We want to show now that $\langle u ,v \rangle \in H$ i.e. part of the MST.
To show this, we claim that, in fact, $\langle u ,v \rangle$ is the smallest weight edge in $G_{2-hop}$ connecting $S(u)^{i-1}$ and $V \setminus S(u)^{i-1}$. 
Suppose not. Let us consider the edges between $S(u)^{i-1}$ and $V \setminus S(u)^{i-1}$. Recall that a cut in a connected graph is a minimal set of edges whose removal disconnects the graph. Therefore, the edges between $S(u)^{i-1}$ and $ V \setminus S(u)^{i-1}$ form a cut in $G_{2-hop}$. A well known property called {\em cut property} of MST  states that the minimum weight edge of any cut belongs to the MST \cite{kruskal1956shortest}. Now let, if possible, a different edge, $\langle u'' ,v'' \rangle$ in $G_{2-hop}$ be the edge with the smallest weight connecting $S(u)^{i-1}$ and $V \setminus S(u)^{i-1}$.  Then by the cut property, $\langle u'',v'' \rangle$ belongs to $H$ and would have been considered as a candidate pair for merge in an earlier iteration. In that case, $u''$ and $v''$ will belong to the same supernode which is a contradiction. 
\end{proof}

\subsection{Scalable Binary Search based Algorithm}

Based on Theorem~\ref{thm:mst-sufficiency}, we can use $H$ instead of $L$ for the list of candidate pairs. Furthermore, we show in following theorem that the utility is non-increasing as we merge candidate pairs of $H$ in order. 
\begin{theorem}[\textbf{Non-increasing utility theorem}]\label{thm:utility}
Let $\mathcal{G}^0=G$ and $\mathcal{G}^{t}$ be the summary graph obtained by processing $H$ in order from index 1 to $t$ where $1 \leq t \leq |H|$. 
Then 
$u(\mathcal{G}^{t-1})$ $\geq$ $u(\mathcal{G}^{t})$. 

\end{theorem}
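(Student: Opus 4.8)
The plan is to reduce the statement to a sub-additivity (convexity) property of the positive-part function $x \mapsto \max(x,0)$. First I would dispose of the trivial case: when processing $H[t] = \langle u,v\rangle$, if $S(u)^{t-1} = S(v)^{t-1}$ the merge does not fire, so $\mathcal{G}^{t} = \mathcal{G}^{t-1}$ and $u(\mathcal{G}^{t}) = u(\mathcal{G}^{t-1})$, giving equality. The substantive case is $S(u)^{t-1} \neq S(v)^{t-1}$, where the two supernodes are merged into $S = S(u)^{t-1} \cup S(v)^{t-1}$.

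Next I would rewrite the utility of Equation~\ref{eq:utilitythreshold} as a sum of independent per-pair contributions. For an unordered pair of (possibly equal) supernodes $(X,Y)$ write $A(X,Y) = \sum_{(a,b)\in E,\, a\in X,\, b\in Y} C(a,b)$ for the importance of the actual edges they span and $B(X,Y) = \sum_{(a,b)\notin E,\, a\in X,\, b\in Y} C_s(a,b)$ for the importance of the spurious edges that a superedge would introduce. The greedy rule installs the superedge $(X,Y)$ exactly when adding it costs less than omitting it, i.e. when $B(X,Y) < A(X,Y)$; hence the contribution of the pair to $u(\mathcal{G})$ is precisely $\max\!\bigl(A(X,Y) - B(X,Y),\, 0\bigr)$. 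I would record as an invariant of the algorithm that every superedge decision in the current summary is locally optimal in this sense (re-established whenever a supernode is modified, since $A$ and $B$ depend only on the node sets), so that both $u(\mathcal{G}^{t-1})$ and $u(\mathcal{G}^{t})$ equal the sum of these positive-part terms over all supernode pairs.

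Then I would localize the change caused by the merge. Merging $S(u)^{t-1}$ and $S(v)^{t-1}$ leaves every pair not incident to them untouched, so only two families of terms move. For each third supernode $W$, the pairs $(S(u)^{t-1},W)$ and $(S(v)^{t-1},W)$ coalesce into $(S,W)$, and because $A$ and $B$ are additive over disjoint node sets we have $A(S,W) = A(S(u)^{t-1},W) + A(S(v)^{t-1},W)$, and likewise for $B$. The self-loop is handled identically: the three pairs $(S(u)^{t-1},S(u)^{t-1})$, $(S(v)^{t-1},S(v)^{t-1})$ and $(S(u)^{t-1},S(v)^{t-1})$ coalesce into the single self-loop $(S,S)$ with additive $A$ and $B$. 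Applying the elementary inequality $\max(x+y,0) \leq \max(x,0)+\max(y,0)$ (which follows from $x+y \leq \max(x,0)+\max(y,0)$ together with $0 \leq \max(x,0)+\max(y,0)$) to each coalesced family shows the merged contribution never exceeds the sum of the contributions it replaces. Summing over every $W$ and the self-loop yields $u(\mathcal{G}^{t}) \leq u(\mathcal{G}^{t-1})$.

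I expect the only real obstacle to be the bookkeeping that justifies the clean per-pair decomposition: arguing that each pair contributes exactly $\max(A-B,0)$ both before and after the merge, which rests on the greedy-optimality invariant and on the observation that the merge re-decides precisely the superedges incident to $S$ while leaving all other decisions intact. Once that decomposition is established, the conclusion is immediate from sub-additivity of $x \mapsto \max(x,0)$, and notably no case analysis on clique-versus-independent-set structure is required.
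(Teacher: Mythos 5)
Your proof is correct and follows essentially the same route as the paper's: localize the utility change to the pairs incident to the two merged supernodes, use additivity of the actual/spurious edge weights over the disjoint union, and finish with a one-line inequality — your subadditivity of $x \mapsto \max(x,0)$ is exactly the dual of the paper's superadditivity of $\min$, since $\max(A-B,0) = A - \min(A,B)$. If anything, your write-up is more explicit on two points the paper glosses over: the trivial case where the pair is already in one supernode, and the self-loop, which the paper handles with ``a similar strategy'' while you spell out the three-way coalescence of $(S(u),S(u))$, $(S(v),S(v))$, and $(S(u),S(v))$ into $(S,S)$.
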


\begin{proof} 
Suppose at iteration $t$, we take a  pair $\langle u,v\rangle \gets H[t]$  and two supernodes ${S(u)}$ and ${S(v)}$ be merged together to create a new supernode $S$. After the merge, all the superedges between $S$ and $S(w) \in N({S})$ should be updated where $N({S})$ is the set of supernodes such that $\exists (u,v) \in E$ for any $u \in S$ and $v \in S(w)$. 

\looseness=-1
Let $\Sedge(S_i,S_j)$ be the cost of  adding a superedge between $S_i$ and $S_j$. As some spurious edges are introduced by adding a superedge, the cost includes the cost of all those spurious edges. Similarly, 
let $\nSedge(S_i,S_j)$ be the cost of not adding a superedge between $S_i$ and $S_j$. As some actual edges are missed by not adding a superedge, the cost includes the cost of all those actual edges.
We have
\begin{equation}
\Sedge(S_i,S_j) = \sum_{\substack{(u,v)\notin E \\ u \in S_i, v \in S_j} } C_s(u,v)
\end{equation}

\begin{equation}
\nSedge(S_i,S_j) = \sum_{\substack{(u,v)\in E \\ u \in S_i, v \in S_j} } C(u,v)
\end{equation}

Note that, at iteration $t$, when two supernodes $S(u)$ and $S(v)$ are merged together into $S$, then the number of spurious edges introduced on adding a superedge between $S$ and any neighbor $S(w)$ is exactly equal to the the sum of spurious edges introduced on connecting $S(u)$, $S(w)$ and $S(v)$, $S(w)$. When $S(w) \neq S$,
the cost of adding a superedge between $S$ and $S(w)$, $\mbox{Sedge}(S,S(w))$, and the cost of not adding a superedge between $S$ and $S(w)$, \\ $\mbox{nSedge}(S,S(w))$, can be calculated as follows:
\begin{equation}
  \Sedge(S,S(w)) = \Sedge(S(u),S(w)) + \Sedge(S(v),S(w))  \label{eq:Sedge}
\end{equation}
\begin{equation}
  \nSedge(S,S(w)) = \nSedge(S(u),S(w)) + \nSedge(S(v),S(w))  \label{eq:nSedge}
\end{equation}

 Let us represent $\mbox{loss}(S(w))$ as the smaller of the cost of adding or not adding a superedge between new supernode $S$ and any other candidate super neighbor $S(w)$. Formally, it is defined by:

\begin{equation} \label{eq:loss-s(w)}
    \begin{split}
        \mbox{loss}(S(&w)) = min(\Sedge(S,S(w)),\nSedge(S,S(w)))  \\ 
          -&min(\Sedge(S(u),S(w)) ,  \nSedge(S(u),S(w))) \\    -&min(\Sedge(S(v),S(w)),  \nSedge(S(v),S(w))
    \end{split}
\end{equation}

\looseness=-1
Let us denote $\Sedge(S(u),S(w))$ as $a$, $\nSedge(S(u),S(w))$ as $b$, $\Sedge(S(v),S(w))$ as $c$, and $\nSedge(S(v),S(w))$ as $d$. Then, Equation \ref{eq:loss-s(w)} is of the form $\min(a+c,b+d)- (\min(a,b) + \min(c,d))$. 
As $\min(a+c,b+d) \geq \min(a,b) + \min(c,d)$, we have $\mbox{loss}(S(w)) \geq 0$. 
So,
$u(\mathcal{G}^{t-1}) - u(\mathcal{G}^{t}) =\sum_{S(w) \in {N(S(u))\cup N(S(v))}} \mbox{loss}(S(w)) \geq 0.$

We can follow a similar strategy for the case of a superloop in which $S(w) = S$ and show that 
$u(\mathcal{G}^{t-1}) \geq u(\mathcal{G}^{t})$.
\end{proof}

Theorems \ref{thm:mst-sufficiency} and \ref{thm:utility} form the basis of our new approach T-BUDS that uses binary search over the sorted list of MST edges, $H$, in order to find the largest index $t$ for which $u(\mathcal{G}^t) \geq \tau$
(see Algorithm~\ref{alg:BinaryS}). 
This requires computing $H$ (done using Algorithm~\ref{alg:2-hopMST}) followed by $\lg(|H|)$ computations of utility. The latter is done using Algorithm~\ref{alg:computeUtility}.

Given graph $G = (V,E)$ and centrality scores for each node $C[u \in V]$, T-BUDS first creates the sorted candidate pairs $H$ by calling the Two-hop MST function (Algorithm~\ref{alg:2-hopMST}). 
This function follows the structure of Prim's algorithm~\cite{prim1957shortest} for computing MST. 
However, we do not want to build the $G_{2-hop}$ graph explicitly. 
As such, we start with an arbitrary 
node $s$ and insert it into a priority queue $Q$ with a key value of 0. All other nodes are initialized with a key value of $\infty$. 
For any given node $v$ with minimum key value deleted from $Q$, $v$ is included in the MST, and the key values of its two-hop away neighbours are updated, when needed. 

After creating the two-hop MST and sorting its edges, T-BUDS uses a binary search approach and iteratively performs merge operations from the first pair until the middle pair in $H$ (Algorithm~\ref{alg:BinaryS}). 
In each iteration, we pick a pair of nodes $u,v$ from $H$, find their supernodes $S(u)$ and $S(v)$ and merge them into a new supernode~$S$. 
This process continues until the algorithm reaches the middle point. $\mathcal{G}$ is the resulting summary after these operations and we compute its utility in line~11. 
If this utility $\geq \tau$, then we search for the index $t$ in the second half, otherwise, we search for the index $t$ in the first half. 
The algorithm finds the best summary in $\lg{|H|}$ iterations and $|H|$, being the number of edges in the MST of $G_{2-hop}$, is just $O(V)$.

\begin{algorithm}[]
    \caption{T-BUDS}\label{alg:BinaryS}
    \begin{algorithmic}[1]
    \State \textbf{Input:} $G = (V,E), C, \tau $ \State{$H \gets \mbox{\Call{TwoHopMST}{G,C}}$}
    \State $low \gets 0,high \gets |H|-1$
    \While{$low \leq high$}
        \State $mid \gets  \frac{low+high}{2}$
        \State {$ \mathcal{V} \gets V$, $i \gets 0$}
    \For{$i \leq mid$}
        \State{$\langle u,v \rangle \gets H[i]$, $i \gets i+1$}
        \State{$S \gets$ \Call{Merge}{$S(u),S(v)$}} \label{alg:BinaryS:mergeline}
        \State{$\mathcal{V} \gets \left(\mathcal{V} \setminus \{S(u), S(v)\}\right) \cup S$}
    \EndFor
    \State{$u(\mathcal{G})\gets$ \Call{ComputeUtility}{ $\mathcal{V}$}}
    \If{$ u(\mathcal{G}) \geq \tau$}{  $high=mid-1$}
    \Else {  $low=mid+1$}
    \EndIf
    \EndWhile
    \State{$\Call{BuildSuperEdges}{\mathcal{V}}$}
\end{algorithmic}
\end{algorithm}
 Algorithm \ref{alg:computeUtility} is used to compute the utility for a specific summary $\mathcal{G} = (\mathcal{V},\mathcal{E})$.
 The algorithm iterates over all supernodes one at a time and for a given supernode $S_i$, it creates two maps ($count$ and $sum$) to hold the details for the superedges connected to $S_i$.  $count[S_j]$ stores the number of actual edges between supernodes $S_i$ and $S_j$. Similarly, $sum[S_j]$ contains the sum of the weights for all the edges between $S_i$ and $S_j$. Line \ref{alg:computeUtility:line5} to line \ref{alg:computeUtility:line13} initialize these two structures. $\Sedge(S_i,S_j)$ (the cost of drawing a super edge between $S_i$ and $S_j$) and $\nSedge(S_i,S_j)$ (the cost of not drawing a super edge between $S_i$ and $S_j$) can be estimated using $count$ and  $sum$.
As $\nSedge(S_i,S_j)$ is the sum of weights of edges in $G$ between nodes in $S_i$ and $S_j$, it is exactly equal to $sum[S_j]$ (line \ref{alg:computeUtility:line15}).  
If $S_i \neq S_j$, the number of spurious edges is equal to $|S_i||S_j|-count[S_j]$ and since each spurious edge has cost $\frac{1}{\binom{|V|}{2} - |E|}$, \\ $\Sedge(S_i,S_j) = \frac{|S_i||S_j|-count{[S_j]}}{\binom{|V|}{2} - |E|}$ (line \ref{alg:computeUtility:line16}).
Similarly, if $S_i = S_j$, the number of spurious edges is $\binom{|S_i|}{2}-count{[S_j]}$ and $\Sedge(S_i,S_j) = \frac{\binom{|S_i|}{2}-count{[S_j]}}{\binom{|V|}{2} - |E|}$ (line \ref{alg:computeUtility:line17}).
Finally the utility loss can be estimated as $\min(\Sedge(S_i,S_j)$ , $\nSedge(S_i,S_j))$ and the utility is decremented by the loss. Algorithm \ref{alg:computeUtility} returns the final utility for $\mathcal{G}$ which is used by Algorithm \ref{alg:BinaryS} for making decisions.

\smallskip
\noindent
{\bf Building Superedges.}
Once the appropriate supernodes have been identified, a superedge is added between two supernodes $S_i$ and $S_j$ if and only if {$\Sedge(S_i,S_j) \leq \nSedge(S_i,S_j)$}. This task can be completed in $O(|E)$ time: Line \ref{alg:computeUtility:line19} of Algorithm \ref{alg:computeUtility} can be replaced by the task of adding superedge between $S_i$ and $S_j$.

\begin{algorithm}[]
    \caption{Two-hop MST}\label{alg:2-hopMST}
    \begin{algorithmic}[1]
    \State \textbf{Input:} $G = (V,E), C $ \Comment{$C$ is centrality scores array for nodes}
    \State{ $key[s] \gets 0$,
            $parent[s] \gets \mbox{\sl Null}$,
            $Q.insert(s,key[s])$}
     \For{($v \in V\setminus\{s\}$)}
     \State{$key[v] \gets \infty$,
            $parent[v] \gets \mbox{\sl Null}$, 
            $Q.insert(v,key[v])$}
     \EndFor
     \While{$!isEmpty(Q)$}
     \State{$(v,\_) = Q.delMin()$}
     \For{($w \in N(N(v)) \mid w \in Q \And w \neq v$)}
     \If{$key[w] > C[v]+C[w]$}
     \State{$Q.setKey(w, C[v] + C[w])$}
     \state{$parent[w] \gets v$}
     \EndIf
     \EndFor
     \EndWhile
     \State $H \gets \{(v,parent[v]) : v \in V\setminus\{s\} \}$
     \State \Return{sorted $H$ based on $C$}
\end{algorithmic}
\end{algorithm}
\begin{algorithm}[]
\caption{Compute Utility}\label{alg:computeUtility}
\begin{algorithmic}[1]
 \State \textbf{Input:} $G=(V,E),utility \gets 1, \mathcal{V}$ \Comment{set of supernodes}
    \For{$S_i \in \mathcal{V}$} \Comment{for each supernode}
        \State{$count \gets \{\}$}, {$sum \gets \{\}$} 
        \For{$u \in S_i$} \label{alg:computeUtility:line5}
            \For{$v \in N(u)$} 
                \State{$S_j \gets S(v)$}  \label{alg:computeUtility:findLine}
                \If{$(S_i \neq S_j ) \lor (S_i = S_j \land i < j)$}
                    \If{$count[S_j] \geq 1$}
                        \State{$count{[S_j]} \gets count{[S_j]} + 1$}
                        \State{$sum{[S_j]} \gets sum{[S_j]} + C(u,v)$}
                    \Else
                        \State{$count{[S_j]} \gets 1$} 
                        \State{$sum{[S_j]} \gets C(u,v)$} \label{alg:computeUtility:line13}
                    \EndIf
                \EndIf
            \EndFor
        \EndFor
        \For{$S_j \in count.keys \land i \leq j $} \label{alg:computeUtility:line14}
            \State{$\nSedge(S_i,S_j) \gets sum{[S_j]}$}  \label{alg:computeUtility:line15}           
            \If{$S_i \neq S_j$} {$\Sedge(S_i,S_j) \gets \frac{|S_i||S_j|-count{[S_j]}}{\binom{|V|}{2} - |E|}$} \label{alg:computeUtility:line16}
            
            \Else {  $\Sedge(S_i,S_j) \gets \frac{\binom{|S_i|}{2}-count{[S_j]}}{\binom{|V|}{2} - |E|}$} \label{alg:computeUtility:line17}  
            \EndIf
            \If{$\Sedge(S_i,S_j) \leq \nSedge(S_i,S_j)$}
                \State{$utility \gets utility - \Sedge(S_i,S_j)$} \label{alg:computeUtility:line19}
            \Else {  $utility \gets utility - \nSedge(S_i,S_j)$}
            \EndIf
        \EndFor
    \EndFor
    \State{\textbf{return} $utility$}
\end{algorithmic}
\end{algorithm}

\smallskip
\noindent
{\bf Data structures.}
We used the union-find algorithm \cite{hopcraft1973set} for representing our supernodes. The union operation was used to implement the merge operation in line~\ref{alg:BinaryS:mergeline} of Algorithm~\ref{alg:BinaryS} and the find operation was used to find the corresponding supernode for a specific node in  line~\ref{alg:BinaryS:mergeline} of Algorithm \ref{alg:BinaryS} and line \ref{alg:computeUtility:findLine} of Algorithm \ref{alg:computeUtility}. 
Using path compression with the union-find algorithm allows reducing the complexity of the union and find operations to $O(\lg^{\star}{|V|)}$ (iterated logarithm of $|V|$). 
As $\lg^{\star}{|V|}$ is about 5 when $|V|$ is even more than a billion, we treat it as a constant in our calculations. The union-find algorithm only needs two arrays of size $|V|$ and thus the working memory requirement is $O(|V|)$.  

\subsection{Complexity analysis}










Let us begin by analysing the time complexity of Algorithm~\ref{alg:2-hopMST}. 
As its structure follows that of Prim's algorithm \cite{prim1957shortest}, 
it requires $O(|F| \cdot \lg|V|)$ steps to compute MST. 
As the number of edges in $H$ is $O(|V|)$, sorting it takes $O(|V|\lg|V|)$ time. 
Thus, the total time complexity of Algorithm~\ref{alg:2-hopMST} is $O((|F| + |V|)\cdot \lg|V|)$.
The total space required by Algorithm~\ref{alg:2-hopMST} is $O(|V|)$ as it stores the priority queue $Q$ and arrays $key$, $parents$, and $H$ all of size $O(|V|)$.

Now let us analyse the time complexity of Algorithm~\ref{alg:computeUtility}. To compute the utility of $\mathcal{G}$, the algorithm iterates over all the edges in $G$, each edge exactly once, to identify pairs of supernodes $(S_i, S_j)$ that have at least one edge of $G$ between them. This step, that includes the computation of {\em count} and {\em sum} for each supernode, takes $O(E)$ time.  Once this step is completed, it takes $O(1)$ time to compute the $\Sedge$ and  $\nSedge$ cost for a pair $(S_i, S_j)$. Therefore, the time complexity of Algorithm~\ref{alg:computeUtility} is $O(|E|)$. 
It requires $O(|V|)$ space to store the count and sum arrays. 

Finally, let us analyse the time and space complexity of Algorithm~\ref{alg:BinaryS}. As discussed in Section 5.2, Algorithm \ref{alg:BinaryS} will perform $\lg{|H|}$ iterations. For each iteration, merging supernodes in Algorithm \ref{alg:BinaryS} requires $O(|H|)$ operations 
and the utility estimation using Algorithm \ref{alg:computeUtility}  requires $O(|E|)$ time. Thus the time complexity for each iteration is $O((|E|+|V|)$ and time for a total of $\lg{|H|}$ iterations is $O((|E|+|V|) \cdot \lg(|V|))$.
The space requirement inside Algorithm \ref{alg:BinaryS} is storing $H$ and $\mathcal{V}$, which is $O(|V|)$. Thus, the space requirement of Algorithm \ref{alg:BinaryS} is $O(|V|)$.
Summarizing all the above, we have

\begin{theorem}
The time complexity of T-BUDS is $O((|F| + |V|)\cdot \lg|V|)$.
The space complexity of T-BUDS is $O(|V|)$.
\end{theorem}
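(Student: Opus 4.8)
The plan is to obtain both bounds by decomposing T-BUDS (Algorithm~\ref{alg:BinaryS}) into its three sequential phases and summing the per-phase costs established immediately above. First I would observe that an execution of T-BUDS consists of exactly one call to the two-hop MST routine (Algorithm~\ref{alg:2-hopMST}), followed by the binary-search loop that repeatedly re-merges supernodes and invokes \textsc{ComputeUtility} (Algorithm~\ref{alg:computeUtility}), followed by a single call to \textsc{BuildSuperEdges}. Since these execute one after another, the total running time is the sum of the three phase times, and the total working space is the maximum of the three phase spaces.

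For the first phase I would simply invoke the already-proved bound that Algorithm~\ref{alg:2-hopMST} runs in $O((|F|+|V|)\cdot\lg|V|)$ time using $O(|V|)$ space. For the loop, I would use that binary search over $H$ performs $\lg|H|$ iterations, and that within each iteration the merges cost $O(|H|)$ union--find operations (each $O(\lg^{\star}|V|)$, treated as $O(1)$ via path compression) while the single utility computation costs $O(|E|)$ by the analysis of Algorithm~\ref{alg:computeUtility}. Because $H$ is the edge set of a spanning tree of $G_{2-hop}$ we have $|H| = O(|V|)$, so each iteration is $O(|E|+|V|)$ and the whole loop is $O((|E|+|V|)\cdot\lg|V|)$, using only $O(|V|)$ space for $H$, $\mathcal{V}$, and the union--find arrays. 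The concluding \textsc{BuildSuperEdges} call adds $O(|E|)$ time and $O(|V|)$ space. Adding these three contributions yields total time $O((|F|+|E|+|V|)\cdot\lg|V|)$ and total space $O(|V|)$.

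The one step that is not purely mechanical, and which I regard as the crux, is absorbing the $|E|$ term into the $|F|$ term so that the stated bound $O((|F|+|V|)\cdot\lg|V|)$ follows. For this I would argue $|E| = O(|F|)$: the enumeration of two-hop neighbours in Algorithm~\ref{alg:2-hopMST} routes, through each vertex $b$, all $\deg(b)^2$ candidate pairs, and since $\deg(b)^2 \geq \deg(b)$ for every integer degree, the number of two-hop candidate pairs dominates $\sum_{b}\deg(b) = 2|E|$; hence the $|E|\cdot\lg|V|$ contribution is subsumed by the $|F|\cdot\lg|V|$ contribution. With this observation the running time collapses to $O((|F|+|V|)\cdot\lg|V|)$, and because every phase uses only $O(|V|)$ working memory while the phases run sequentially, the space bound is $O(|V|)$, which completes the argument.
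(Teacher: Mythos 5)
Your decomposition of T-BUDS into three sequential phases---one call to Algorithm~\ref{alg:2-hopMST}, then the binary-search loop with $O(|H|)$ merge operations and one $O(|E|)$ call to Algorithm~\ref{alg:computeUtility} per iteration over $\lg|H| = O(\lg|V|)$ iterations, then one \textsc{BuildSuperEdges}---is exactly the accounting the paper performs, and your intermediate conclusion, total time $O((|F|+|E|+|V|)\cdot\lg|V|)$ with working space $O(|V|)$, is correct and matches the paper's per-phase analysis.

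The genuine gap is in the step you yourself flag as the crux: the claim $|E|=O(|F|)$, justified by saying that the two-hop pairs routed through each vertex $b$ number $\deg(b)^2 \geq \deg(b)$, so that $|F| \geq \sum_b \deg(b) = 2|E|$. This is false. The quantity $\sum_b \deg(b)^2$ counts two-hop \emph{walks}, not elements of $F$: $F$ is a set of distinct pairs, the same pair can be generated through many middle vertices $b$, and the walks of the form $a$--$b$--$a$ yield the degenerate pair $(a,a)$, which is not a candidate pair at all. Concretely, if $G$ is a perfect matching on $n$ vertices then every two-hop walk returns to its start, so $|F|=0$ while $|E|=n/2$; for a path $P_n$, $|F|=n-2 < 2|E|=2n-2$. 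What your absorption actually requires is the weaker statement $|E|=O(|F|+|V|)$ (these counterexamples are consistent with it, since there $|E|\leq |V|$), but that is a nontrivial combinatorial fact, not a consequence of per-vertex walk counting: for a disjoint union of balanced complete bipartite graphs $K_{a,a}$ one has $|E| = |F| + |V|/2$, so the additive $|V|$ term is essential and any valid proof must account for it globally. To be fair, the paper does not supply this step either---it silently adds $O((|F|+|V|)\lg|V|)$ and $O((|E|+|V|)\lg|V|)$ and asserts the stated bound---so your instinct that something must be argued here is sound; but the lemma you invoke is false as stated, and without a proof of $|E|=O(|F|+|V|)$ the defensible conclusions are either the bound $O((|F|+|E|+|V|)\cdot\lg|V|)$, or the theorem's bound made conditional on that lemma.
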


\section{Experiments \label{sec:Experiments}}
The experimental evaluation is divided into the following four  parts:
\begin{enumerate}
    \item Performance analysis of G-SCIS versus SWeG \cite{5} (state-of-the-art in lossless graph summarization) in terms of running time and node reduction.
    \item Performance analysis of G-SCIS on triangle enumeration and Pagerank computation. 
    \item Performance analysis of the T-BUDS versus UDS \cite{1} (state-of-the-art in lossy utility-driven graph summarization). in terms of running time and memory consumption.
    \item Usefulness analysis of the utility-driven graph summarization framework.
\end{enumerate}

We implemented all algorithms in Java~14 on a single machine with
dual 6 core 2.10~GHz Intel Xeon CPUs, 128~GB RAM and
running Ubuntu 18.04.2 LTS.
Even though our machine had 128~GB we used not more than 20~GB of RAM.  
We used seven web and social graphs from (\url{http://law.di.unimi.it/datasets.php}) varying from moderate size to very large, and we ignored the edge directions and self-loops. 
Table \ref{tbl:2-hop} shows the statistics of these graphs. 
\begin{table}[] 
\centering
{
\begin{tabular}{ c|c|r|r} 
\hline
  Graph & Abbr & Nodes & Edges  \\ \hline
 cnr-2000 &CN& 325,557 & 3,216,152 \\ \hline
 hollywood-2009 &H1& 1,139,905 & 113,891,327 \\ \hline 
hollywood-2011 &H2& 2,180,759 & 228,985,632  \\ \hline
 indochina-2004 &IC& 7,414,866 & 194,109,311\\ \hline
 uk-2002 & U1 & 18,520,486 & 298,113,762 \\ \hline
arabic-2005 &AR& 22,744,080 & 639,999,458 \\ \hline
  uk-2005 & U2 & 39,459,925 & 936,364,282 \\ \hline
\end{tabular}   
 }
 \caption{Summary of datasets \label{tbl:2-hop}}
\end{table}

\subsection{Lossless Case: G-SCIS}

In this section, we evaluate the performance of G-SCIS in terms of (1) Reduction in nodes, (2) running time, and (3) efficiency of triangle enumeration and Pagerank computation. 
For (1) and (2) we compare G-SCIS to SWeG \cite{5}, which is the state-of-the-art in lossless graph summarization. 
\subsubsection{Comparison of G-SCIS to SWeG}
The reduction in nodes (RN) is defined as
$RN = (|V| - |\mathcal{V}|)/|V|$ (c.f.~\cite{1}).
Since SWeG produces also correction graphs for addition/deletion ($C^+,C^-$), RN for SWeG is more precisely computed as 
$RN = (|V| - (|\mathcal{V}| \cup |V(C^+)| \cup |V(C^-|)))/|V|$. 
We ran SWeG for different choices of the number of iterations up to 80 and chose the best RN value obtained.
Figure \ref{fig:G-SCISvsSWeG} shows the comparison between G-SCIS and SWeG in terms of RN and running time. As the figure shows, G-SCIS outperforms SWeG  in term of RN and moreover it is orders of magnitude faster than SWeG. On large graphs like AR and U2, SWeG in not runnable within 100 hours while G-SCIS finishes in around 15 and 23 minutes respectively.

\begin{figure}
\includegraphics[scale=0.75]{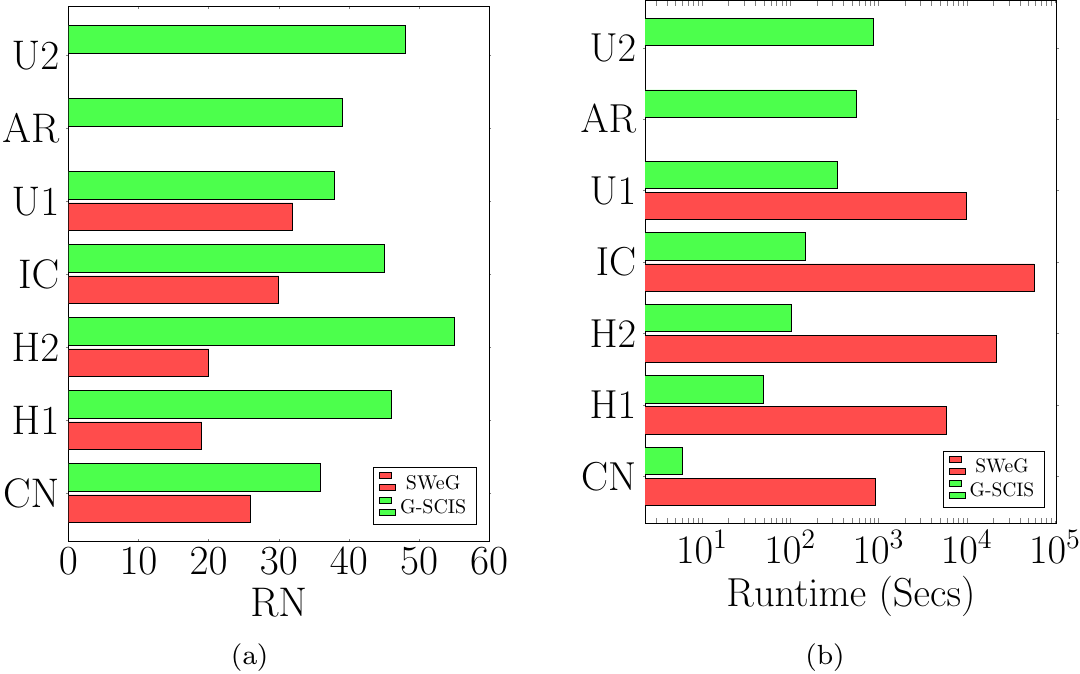}

\caption{Comparison Between G-SCIS and SWeG in terms of node reduction and running time. G-SCIS achieves better reduction than SWeG. Runtime of G-SCIS is orders of magnitude better than SWeG. The latter could not run within 100h for AR and U2. \label{fig:G-SCISvsSWeG}}
\vspace*{-0.4cm}
\end{figure}

\subsubsection{Triangle enumeration and Pagerank computation using G-SCIS summaries} \label{sec:querylossless}


In Figure~\ref{fig:G-SCIS-Query} we show the reduction in runtime for triangle enumeration and Pagerank using G-SCIS summaries versus the runtime of the those algorithms using the original graphs.  
We see a significant reduction in time for both triangle enumeration and Pagerank for all datasets, reaching up to 80\% for IC. We omit results for shortest paths due to space constraints. 
We observe in Figure~\ref{fig:G-SCIS-Query} (left) and (right) a similar order of datasets with some exceptions, such as H2 or U1, for which the order is reversed. We attribute this to the size of the output in triangle enumeration.




%


\begin{figure}
    \centering
\includegraphics[scale=0.75]{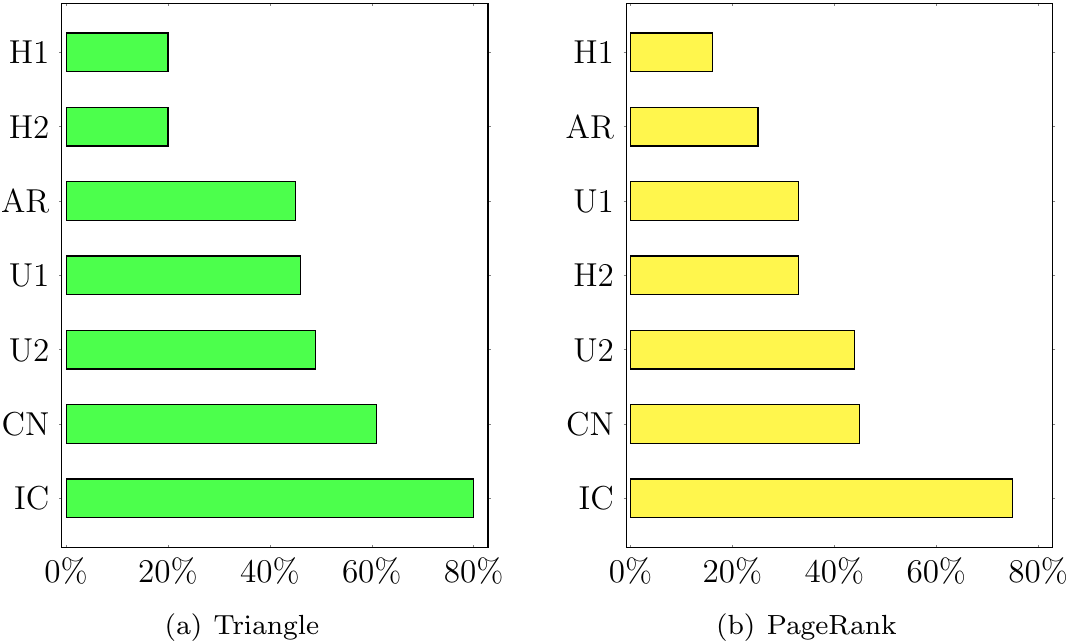}
\caption{Relative improvement on runtime between summary vs original when counting the number of triangles and computing Pagerank. \label{fig:G-SCIS-Query}}
\end{figure}

\subsection{Lossy Case: T-BUDS}

\subsubsection{Performance of T-BUDS}
In this section, the performance of T-BUDS is compared to the performance of UDS in terms of running time and memory usage (Figure~\ref{fig:T-BUDSvsUDS}). 
For our comparison, we set the utility threshold at 0.8.
UDS is quite slow on our moderate and large datasets. Namely, it was not able to complete in reasonable time (100h) for those datasets. 
As such, we provide as input to UDS not the full list of 2-hop pairs as in \cite{1}, but the reduced list from the MST of $G_{2-hop}$. 
This way, we were able to handle with UDS the datasets CN, H1, and H2. However, we still could not have UDS complete for the rest of the datasets.    

Figure~\ref{fig:T-BUDSvsUDS} shows the running time (sec) and memory usage (MB) of T-BUDS and UDS. 
As the figure shows, T-BUDS outperforms UDS in both running time and memory usage by orders of magnitude. 
Moreover, T-BUDS can easily deal with the largest graph, U2, in less than 7 hours.
In contrast UDS takes more than 90 hours on a  moderate graph, such as H2, to produce results. 

\begin{figure}
    \centering
    \includegraphics[scale=0.75]{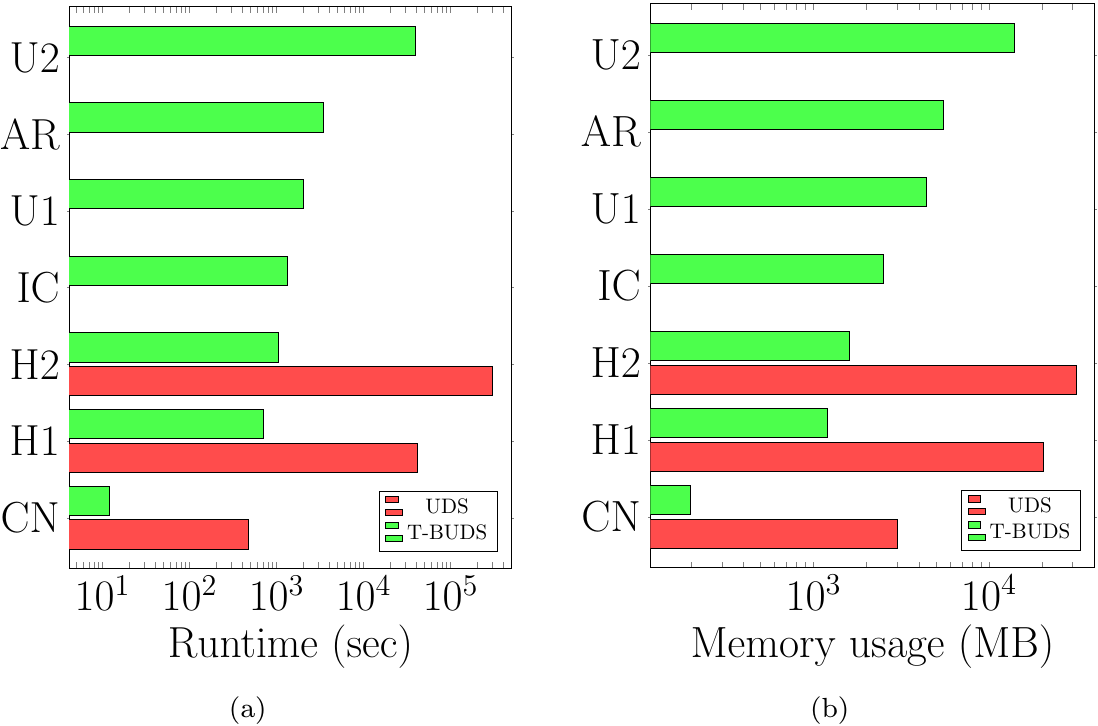}
    \caption{T-BUDS vs UDS in terms of runtime in sec (a) and memory usage in MB (b). $\tau$ is set to 0.8. T-BUDS is orders of magnitude faster than UDS. We provide our MST edge pairs as input for UDS; the original version of UDS could not complete within 100h for all the datasets but CN. With MST as input, UDS still could not complete for IC, U1, AR, and U2. \label{fig:T-BUDSvsUDS}}
\end{figure}

In another experiment we compare the performance of T-BUDS and UDS for varying utility thresholds. 
Figure~\ref{fig:T-BUDS:CN-H09-tau} shows the runtime of the two algorithms on two different graphs CN and H1 in terms of varying utility threshold. 
Having an algorithm that is computationally insensitive to changing the threshold is desirable because it allows the user to conveniently experiment with different values of the threshold.
As shown in the figure, the runtime of T-BUDS remains almost unchanged across different utility thresholds. In contrast, UDS strongly depends on the utility threshold and its runtime grows as the threshold decreases. 

\begin{figure}
    \centering
    \includegraphics[scale=0.75]{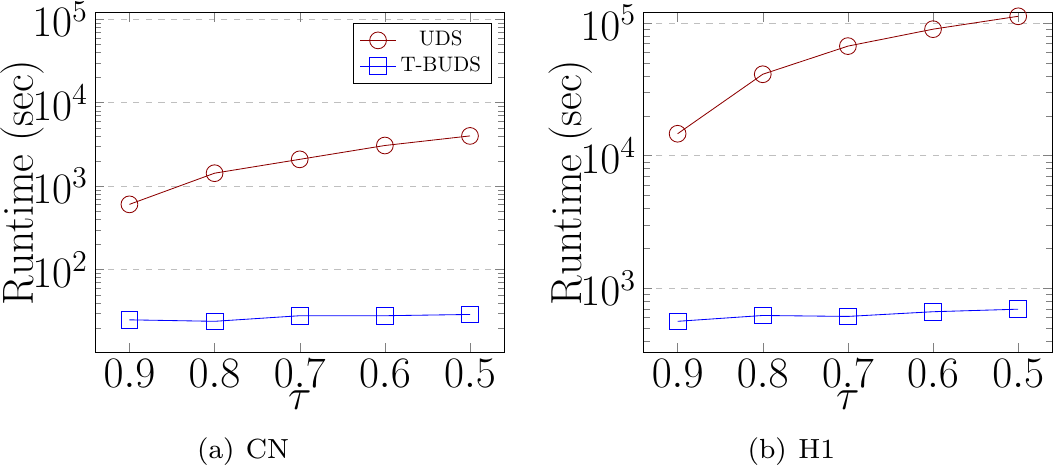}
    \caption{T-BUDS vs UDS for different utility thresholds on CN and H2. T-BUDS is faster by orders of magnitude and, being binary search based, is quite stable as $\tau$ varies.}
    \label{fig:T-BUDS:CN-H09-tau}
\end{figure}

\subsubsection{Usefulness of Utility-Driven Framework}

%
 In this section, we study the performance of T-BUDS towards top-$k$ query answering. 
 To do so, we compute the Pagerank centrality (P) for the nodes, and  assign (normalized) importance score to each edge based on the sum of the importance scores of its two endpoints. We then compute the summary using T-BUDS.
 
 Subsequently, we obtain the top $t\%$ of central nodes in $G$ based on a centrality score (such as Pagerank, Degree, Eigenvector and Betweenness) and check if the corresponding supernode of each such central node in the summary graph is small in size.
 This is desirable because the centrality for a supernode in the summary is divided evenly among the nodes inside it. 
 Towards this, we use the notion of app-utility as defined in \cite{1}.
 Namely, the {\em app-utility} value of a top-$k$ query is as follows. 
 \begin{equation}
  \mbox{app-utility} = \frac{\sum_{v\in V_t}\frac{1}{|S(v)|}}{|V_t|}  
  \label{eq:app-utility}
 \end{equation}
  where $V_t$ is the set of top $t\%$ central nodes, $|V_t| = t\% \times |V|$. The app-utility value is between 0 and 1 and the higher the value, the better the summarization is at capturing the structure of the original graph.  
 $\mbox{app-utility} = 1$ indicates that each central node is a supernode of size 1 and $\mbox{app-utility} < 1$ if there is at least one central node in a supernode of size greater than one, i.e. ``crowded'' with other nodes. 
 
 Table \ref{tbl:app-utility} shows the performance of T-BUDS with varying  $\tau$ and top-$t\%$ central nodes on two graphs CN and H1.
The four columns after RN show the app-utility value of T-BUDS with respect to top-($t\%$) of central nodes on CN graph and the last four columns show the app-utility value of T-BUDS on H1 graph. 
We use four types of top-$t\%$ queries, Pagerank (P),  Degree (D), Eigenvector (E), and Betweenness (B). 
In the table, the first five rows labeled P show the app-utility value of T-BUDS summary with respect to Pagerank query, the next five rows labeled D show the app-utility value of T-BUDS summary with respect to degree query, and so on. 
As can be seen from Table \ref{tbl:app-utility}, T-BUDS performs quite well on top-$t\%$ queries especially for Pagerank and Betweenness centrality measures. 

We compare the performance of T-BUDS 
versus the lossy version of SWeG in terms of the app-utility value for different centrality measures, namely Pagerank, Degree, Eigenvector, and Betweenness. 
In order to fairly compare against SWeG, which does not accept a utility threshold $\tau$ as a parameter, 
we fixed five different $\tau$ values, ${0.5,0.6,0.7,0.8,0.9}$, and calculate the reduction in nodes (RN) for each value when using T-BUDS. Note that $RN = 1 - {|\mathcal{V}|}/{|V|}$.
Then we run SWeG (lossy version) and stop it when each RN value is reached. 
We compute the app-utility value on each summary for the top 20\%, 30\%, 40\% and 50\% of central nodes.

In Figure \ref{fig:App-UtilityPagerank}, we show the relative improvement of T-BUDS over SWeG for two scenarios, $\tau=0.8$ and $\tau=0.6$ for $t=20\%$.
We observe T-BUDS to be significantly better than SWeG. For instance we obtain about 30\% and $50\%$ improvement in app-utility for D for $\tau=0.8$ and $\tau=0.6$.



\begin{table}[]
    \centering
    \caption{T-BUDS: App-utility for top-$k$ queries for Pagerank (P), Degree (D), Eigenvector (E), and Betweenness (B) centralities.}
    \label{tbl:app-utility}
     \resizebox{0.45\textwidth}{!}{
    \begin{tabular}{|c|c|c||c|c|c|c||c|c|c|c|}\hline
     T-BUDS& & & \multicolumn{4}{c}{CN}   & \multicolumn{4}{c|}{H1} \\ \hline
 Centrality &$\tau$& \cellcolor{gray!35}RN & 20\% &30\%&40\%&50\%&20\%&30\%&40\%&50\% \\ \hline
\multirow{5}{*}{\textit{P}} & 0.50&\cellcolor{gray!35}0.58&1.00&1.00&1.00&0.84&1.00&1.00&1.00&0.84 \\ 
&0.60&\cellcolor{gray!35} 0.53&1.00&1.00&1.00&0.94&1.00&1.00&1.00&0.94 \\
&0.70&\cellcolor{gray!35}0.46&1.00&1.00&1.00&1.00&1.00&1.00&1.00&1.00 \\
&0.80&\cellcolor{gray!35}0.38&1.00&1.00&1.00&1.00&1.00&1.00&1.00&1.00 \\
&0.90& \cellcolor{gray!35}0.28&1.00&1.00&1.00&1.00&1.00&1.00&1.00&1.00 \\ \hline\hline
\multirow{5}{*}{\textit{D}}& 0.50&\cellcolor{gray!35}0.58&0.36&0.33&0.38&0.47&0.96&0.82&0.68&0.59 \\ 
&0.60&\cellcolor{gray!35} 0.53&0.40&0.37&0.42&0.51&0.99&0.92&0.81&0.71 \\ 
&0.70&\cellcolor{gray!35}0.46&0.44&0.42&0.47&0.56&1.00&0.97&0.90&0.82 \\ 
&0.80&\cellcolor{gray!35}0.38&0.53&0.53&0.58&0.65&1.00&0.99&0.96&0.90 \\ 
&0.90&\cellcolor{gray!35}0.28&0.63&0.68&0.72&0.77&1.00&0.99&0.99&0.97 \\ \hline\hline
\multirow{5}{*}{\textit{E}}&0.50&\cellcolor{gray!35}0.58&0.48&0.47&0.50&0.44&0.66&0.59&0.52&0.46 \\ 
&0.60&\cellcolor{gray!35}0.53&0.53&0.52&0.55&0.49&0.72&0.66&0.60&0.55 \\
&0.70&\cellcolor{gray!35}0.46&0.58&0.58&0.61&0.54&0.78&0.73&0.67&0.63 \\ 
&0.80&\cellcolor{gray!35}0.38&0.64&0.65&0.69&0.63&0.83&0.80&0.76&0.72 \\ 
&0.90&\cellcolor{gray!35}0.28&0.72&0.74&0.78&0.72&0.88&0.87&0.84&0.82 \\ \hline \hline
\multirow{5}{*}{\textit{B}}& 0.50&\cellcolor{gray!35}0.58&0.60&0.59&0.48&0.44&0.51&0.44&0.43&0.40 \\ 
&0.60&\cellcolor{gray!35}0.53&0.68&0.65&0.53&0.48&0.56&0.51&0.50&0.48 \\ 
&0.70&\cellcolor{gray!35}0.46&0.73&0.71&0.58&0.54&0.62&0.59&0.58&0.57 \\ 
&0.80&\cellcolor{gray!35}0.38&0.80&0.80&0.66&0.62&0.70&0.68&0.67&0.66 \\
&0.90&\cellcolor{gray!35}0.28&0.91&0.90&0.79&0.75&0.80&0.79&0.78&0.77
\\ \hline
\end{tabular}}
\vspace*{-0.2cm}
\end{table}

\begin{figure}
     \centering
    \includegraphics[scale=0.75]{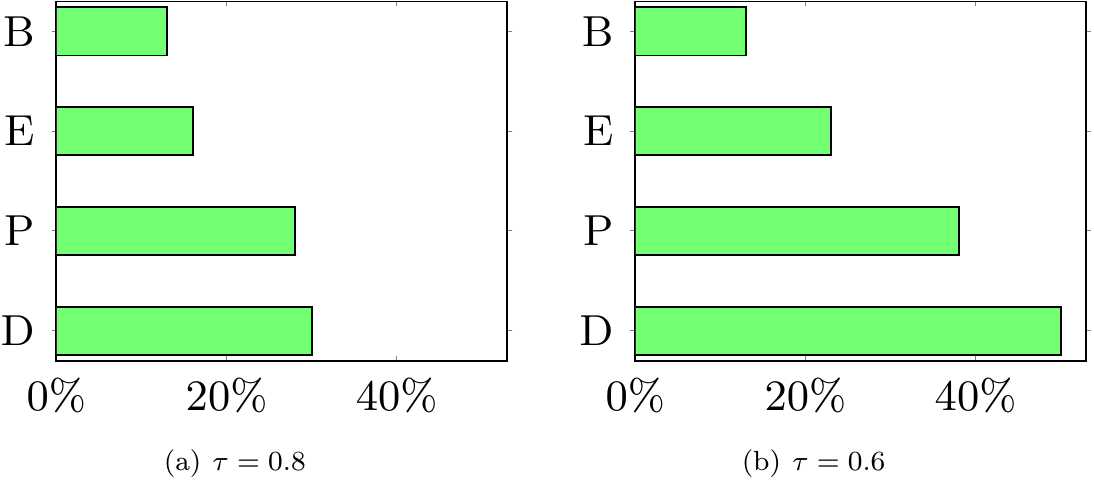}
    \caption{T-BUDS vs SWeG with respect to  app-utility for top-$20\%$  queries on CN. The x axis shows percentages, the y axis shows different queries (D, P, E, B). SWeG lossy was run until it obtained the RN values corresponding to values of $\tau$. The RN values are given in Table~\ref{tbl:app-utility}, i.e. 0.58, 0.53, 0.46, 0.38, 0.28.
    Graph summaries of T-BUDS provide significantly better app-utility than those of SWeG. The difference becomes more pronounced as $\tau$ is lowered.  
    \label{fig:App-UtilityPagerank}}
    \vspace{-0.2cm}
\end{figure}

\section{Related Work \label{sec:RelWorks}}

Graph summarization has been studied in different contexts and we can classify the proposed methodologies into two general categories, grouping and non-grouping. 
The non-grouping category includes sparsification-based methods \cite{shen2006visualanalysis,li2009egocentricinformation,9}
and 
sampling-based methods \cite{hubler2008metropolisalgorithm,jleskovec2006samplinglargegraphs,asmaiya2010samplingcommunitystructure,aahmed2013distributedlargescale,nyan2016previewtables,eliberty2013simple}.
For a more detailed analysis of non-grouping methods, see the survey by Liu et al. \cite{liu2018graph}.

The grouping category of methods is more commonly used for graph summarization and as such has received a lot of attention \cite{1,2,3,4,5,6,7,8,khan2015set}. 
In this category, works such as \cite{2,6} can only produce lossy summarizations optimizing different objectives.
On the other hand, \cite{4,5} are able to generate both lossy and lossless summarizations. 
%
Among works of the grouping category, we discuss the following works \cite{1,4,5,24} that aim to preserve utility and as such are more closely related to our work.


Navlakha et al.~\cite{4} introduced the technique of summarizing the graph by a compact representation containing the summary along with  correction sets. 
Their goal was to minimize the reconstruction error. Liu et. al.~\cite{24} proposed a distributed solution to improve the scalability of the approach in \cite{4}. Recently, Shin et. al. \cite{5}, proposed SWeG, that builds on the work of \cite{4}.  
They used a shingling and minhash based approach to prune the search space for discovering promising candidate pairs. 

 In the work of Kumar and Efstathopoulos \cite{1}, the UDS algorithm was proposed which generates summaries that preserve the utility above a user specified threshold. 
 However, UDS cannot be used for lossless summarization as the summary generated for such a case is the original graph unchanged. 
 Furthermore, for the lossy case, UDS is not scalable to moderate or large graphs.


\section{Conclusions}
In this work, we study utility-driven graph summarization in-depth and made several novel contributions. We present a new, lossless graph summarizer, G-SCIS, that can output the optimal summary, with the smallest number of supernodes, without using correction graphs as in previous approaches. We show the versatility of the G-SCIS summary using popular queries such as enumerating triangles, estimating Pagerank and computing shortest paths. 

We design a scalable, lossy summarization algorithm, T-BUDS. Two key insights leading to the scalability of T-BUDS are the use of MST of the two-hop graph combined with binary search over the MST edges. We demonstrate the effectiveness of T-BUDS towards answering top-$t\%$ queries based on popular centrality measures such as Pagerank, degree, eigenvector and betweenness.

 \newpage
 \balance
 \bibliography{main}
 \bibliographystyle{acm}
\end{document}